\newcommand{\reg}{\operatorname{reg}}
\newcommand{\rank}{\operatorname{rank}}
  \newcommand{\lcm}{\operatorname{lcm}}
\newcommand{\Z}{\mathbb Z}
\newcommand{\N}{\mathbb N}
\newcommand{\F}{\mathbb F}
\newcommand{\K}{{\mathbb K}}
\def\ev{{\text{ev}}}
\newtheorem{theorem}{Theorem}[section]
\newtheorem{lemma}[theorem]{Lemma}
\newtheorem{corollary}[theorem]{Corollary}
\newtheorem{proposition}[theorem]{Proposition}
\newtheorem{definition}[theorem]{Definition}
\newtheorem{example}[theorem]{Example}
\newtheorem{remark}[theorem]{Remark}
\numberwithin{equation}{section}
\newenvironment{proofA}{\noindent\textit{Proof of Theorem \ref{1abfree}}}{\hfill{$\Box$}}
\newenvironment{proofC}{\noindent\textit{Proof of Proposition \ref{hilbertseries}}}{\hfill{$\Box$}}
\renewcommand{\P}{{\mathbb P}}
\colorlet{darkred}{red!80!black}
\colorlet{darkblue}{blue!80!black}
\colorlet{darkgreen}{green!40!black}
\begin{document}

\title{Algebraic Invariants of Codes on Weighted
Projective Planes}


\author{Ya\u{g}mur \c{C}ak\i ro\u{g}lu}
\address{ Department of Mathematics,
  Hacettepe  University,
  Ankara, TURKEY}
\curraddr{}
\email{yagmur.cakiroglu@hacettepe.edu.tr}

\author{Mesut \c{S}ah\.{i}n}
\address{ Department of Mathematics,
  Hacettepe  University,
  Ankara, TURKEY}
\curraddr{}
\email{mesut.sahin@hacettepe.edu.tr}
\thanks{The authors are supported by T\"{U}B\.{I}TAK Project No:119F177}

\subjclass[2020]{Primary 14M25; 14G05; Secondary 94B27; 11T71}
\date{}

\dedicatory{}

\begin{abstract} Weighted projective spaces are natural generalizations of projective spaces with a rich structure. Projective Reed-Muller codes are error-correcting codes that played an important role in reliably transmitting information on digital communication channels. In this case study, we explore the power of commutative and homological algebraic techniques to study weighted projective Reed-Muller (WPRM) codes on weighted projective spaces of the form $\P(1,1,b) $. We compute minimal free resolutions and thereby obtain Hilbert series for the vanishing ideal of the $\F_q$-rational points, and compute main parameters for these codes.
 
\end{abstract}
\keywords{algebraic invariants, weighted projective spaces, error-correcting codes, toric codes}
\maketitle 

\section{Introduction}
Let $q$ be a prime power and $\F_q$ denote the finite field with $q$ elements. Denote by $\K$ the algebraic closure of $\F_q.$ Let $w_1,\dots,w_r$ be positive integers with a trivial greatest common divisor. The \textbf{weighted projective space} associated to the weights $w_1,\dots,w_r$ is the quotient space \[\P(w_1,\dots,w_r)=(\K^{r}\setminus \{0\})/\K^*\] under the equivalence relation \begin{equation*}
(x_1,\dots,x_r)\sim(\lambda^{w_1}x_1,\dots,\lambda^{w_r}x_r) \mbox{ for } \lambda \in \K^*.
\end{equation*}
So, points in the space $X=\P(w_1,\dots,w_r)$ are equivalence classes $[x_1:\dots:x_r]$ of points $(x_1,\dots,x_r)\in \K^{r}\setminus \{0\}.$ It is known that elements $[x_1:\dots:x_r]$ of the $\F_q$-rational points $X(\F_q)$ have representatives with coordinates $x_1,\dots,x_r \in \F_q$, see \cite{ACGLOR2017,BRWPS}.

We work with the polynomial ring $S=\F_q[x_1,\dots,x_r]$, which is graded via $\deg x_i =w_i$ for all $i=1,\dots,r$. Thus, we have the following decomposition
\begin{equation*}\label{homgrdring}
S= \bigoplus_{d\in \N W} S_d
\end{equation*}
where $S_d$ is the vector space whose basis consists of the monomials $x_1^{m_1}\cdots x_r^{m_r}$ of degree $d=m_1w_1+\cdots+m_r w_r$ and $\N W:=\langle w_1,\dots,w_r \rangle$ is the subsemigroup of $\N$ generated by the weights. 

Take any subset $Y=\{P_1,\dots,P_N\}\subseteq X(\F_q)$ of $\F_q$-rational points of $X$ and define the evaluation map as follows

\begin{align}\label{evmap}
ev_Y&:\begin{cases}
S_d\rightarrow \F_q^{N}\\
f\mapsto\left(f(P_1),\dots,f(P_N)\right)  
\end{cases}
\end{align}
The image $C_{d,Y}$ is a subspace of $\F_q^{N}$ known as the \textbf{evaluation code} obtained by evaluating homogeneous polynomials of degree $d$ at the points of $Y$. There are three main parameters $[N,K,\delta]$ of a linear code. The \textbf{length} $N$ is defined by the cardinality $|Y|$ of the subset in our case.
The \textbf{dimension} of $C_{d,Y}$ (a measure of efficiency), denoted $K=\dim_{\F_q}C_{d,Y}$ is the dimension as a subspace of $\F_{q}^{N}$.
The number of non-zero entries in any $c\in C_{d,Y}$ is called its \textbf{Hamming weight} and the
\textbf{minimum distance} $\delta$ of $C_{d,Y}$ (a measure of reliability) is the smallest weight among all code words $c\in C_{d,Y}\backslash\{0\}$.
There is an algebraic approach for studying these codes relying on the vanishing ideal $I(Y)$ of $Y$ which is the graded ideal generated by homogeneous polynomials in $S$ vanishing at every point of $Y$. Since the kernel of the evaluation map defined in \eqref{evmap} equals the homogeneous piece $I_d(Y)$ of degree $d$, we have an isomorphism $S_d/I_d(Y)\cong C_{d,Y} $. Thus, the dimension of $C_{d,Y}$ is the weighted Hilbert function defined in Definition \ref{def:hil} 
\begin{equation}\label{dimhil}
\dim_{\F_q}C_{d,Y}=H_{Y}(d).
\end{equation}
The length $N$ of $C_{d,Y}$ is given to be\[N=q^{r-1}+q^{r-2}+\cdots+q+1\] when $Y=X(\F_q)$, see \cite[Section 2.1]{ACGLOR2017}.

Projective Reed-Muller codes on a finite field which are extensions of the classical generalized Reed-Muller codes are introduced by Lachaud in the paper \cite{GL1988}. These codes are obtained by evaluating homogeneous polynomials at all the $\F_q$-rational points of a given projective space. Lachaud \cite{GL1990} has given the parameters of these codes. Reed-Muller codes are error-correcting codes that played an important role in transmitting information on digital communication channels reliably. Projective Reed-Muller codes are well-studied codes (see, \cite{S91,S1992,GL1990,GL1988,A92,BSS2016}), used in real life applications.

Weighted projective spaces are natural generalizations of classical projective spaces, having rich structures and exhibiting interesting algebraic geometric properties. In literature, weighted projective spaces are considered convenient ambient spaces to create interesting classes of linear codes over finite fields and the Weighted Projective Reed-Muller codes $C_{d,Y}$ as defined above has been studied for the first time in \cite{ACGLOR2017}. Parameters of the code on the weighted projective plane $\P(1,a,b)$ of degree $d=k \cdot \lcm(a,b) \leq q$ for some positive integer $k$, were presented in the paper \cite{ACGLOR2017}. Although there are codes bearing the same name introduced by S\o rensen in \cite{S1992}, they are indeed different, see also \cite{GT2013}. A related code has been studied by Dias and Neves \cite{DN2017} who focus on the case where $Y$ is the set of $\F_q$-rational points of the projective weighted torus. They proved that the vanishing ideal is a special binomial ideal, which is generalized later by the second author to the vanishing ideal of the $\F_q$-rational points of the torus of a more general toric variety, see \cite{MS2018}. Nardi offered to extend the length of a toric code parallel to the extension of Reed-Muller codes into projective ones, by evaluating at the full set of $\F_q$-rational points in the paper \cite{JN22}. The parameters of the codes from Hirzebruch surfaces when the evaluation set is the full set of $\F_q$-rational points were computed by Nardi in a previous paper \cite{JN2019}.

In this paper, our main goal is to examine algebraic invariants of the vanishing ideal of $\F_q$-rational points of the weighted projective plane $X=\P(1,a,b)$ and their relation with the parameters of the code whose degree does not have to satisfy the condition $d=k \cdot \lcm(a,b) \leq q$ as in \cite{ACGLOR2017}. Algebraic invariants we focus on are minimal free resolutions, Betti numbers, Hilbert series and regularity.  We pay particular attention to the
case $\P(1,1,b)$ to obtain more explicit information. This yields the Hilbert function giving the dimension of the code and regularity set which is crucial to eliminate trivial codes. 

The paper is organized as follows. We give definitions of the algebraic invariants in Section \ref{sec:pre}. We present the main results for algebraic invariants of the vanishing ideals of the set of $\F_q$-rational points $Y=\P(1,1,b)(\F_q)$ in Section \ref{sec:alginv}. In particular, we give the regularity set of $\P(1,1,b) (\F_q)$ as a consequence. We also compute the dimensions of the codes over $Y$ for any degree $d$. 
In Section \ref{sec:minimum} we compute the minimum distance of these codes and share  tables of their main paramaters for $b=2,5,7$ and $q=2,5$.

\section{Preliminaries}\label{sec:pre}
In this section, we present some definitions which is needed to state our main results. Let $X=\P(w_1,\dots,w_r)$ be a weighted projective space and $Y=X(\F_q)$ be the set of $\F_q$-rational points.  The first algebraic invariant of $Y$ which has a key connection with the homogeneous polynomials corresponding to zero code-words is the following ideal. 

\begin{definition} The vanishing ideal $I(Y)$ is the (homogeneous) ideal of the ring $S=\F_q[x_1,\dots,x_r]$ generated by homogeneous polynomials vanishing on $Y$, so 
\[I(Y)=\bigoplus_{d\in \N W} I_d(Y),\]
where $I_d(Y)$ is the homogeneous piece of degree $d$ and $\N W$ is the subsemigroup of $\N$ generated by the weights.
\end{definition}
The second algebraic invariant of $Y$ related to the parameters of the code is given next.
\begin{definition}\label{def:hil} Let $S_d$ be the vector space spanned by monomials $x_1^{m_1}\cdots x_r^{m_r}$ of degree $d=m_1w_1+\cdots+m_r w_r$ and $I_d(Y)$ be the vector space of homogeneous polynomials of degree $d$ vanishing on $Y$. The (weighted) Hilbert function of $Y$ is defined as follows \[H_{Y}(d)=\dim_{\F_q}{S}_{d}-\dim_{\F_q} I_{d}(Y).\] Also, the (weighted) Hilbert series of $Y$ is defined by 
$$HS_{Y}(d)=\sum\limits_{d\in \N W}H_{Y}(d)t^{d}.$$
\end{definition}
The last algebraic invariant of $Y$ used to eliminate trivial codes is the following set.
\begin{definition}
The regularity set of $Y$ is defined as follows \[\reg (Y)=\{ d\in \N W : H_{Y}(d)=|Y|\}.\]
\end{definition}

We use the following important criterion to prove exactness of the minimal free resolution of the vanishing ideal.
\begin{lemma}
    \label{BE1973}\cite[Corollary 2]{BE1973} Let 
\begin{equation*}
0\rightarrow F_n\xrightarrow{\phi_n}
F_{n-1}\xrightarrow{\phi_{n-1}
}\cdots\xrightarrow{\phi_2} F_1\xrightarrow{\phi_1}F_0
\end{equation*}
be a complex of free modules over a Noetherian ring $S$. Let $\rank (\phi_i)$ be the size of the largest nonzero minor in the matrix describing $\phi_i$, and let $I(\phi_i)$ be the ideal generated by the minors of maximal rank. Then the complex is exact if and only if for all $1\le i\le n$
\begin{enumerate}
    \item $\rank(\phi_{i+1})+\rank(\phi_{i})=\rank(F_i)$
    \item $I(\phi_i)$ contains an $S$-sequence of length $i$.
\end{enumerate}
\end{lemma}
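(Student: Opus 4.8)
The plan is to prove the two implications separately, reducing in both to the case where $S$ is local. Exactness of a bounded complex of finitely generated $S$-modules can be tested after localizing at every prime $\fp\in\Spec S$, so it suffices to understand the local behaviour of the rank and depth hypotheses and then reassemble. I would record at the outset the additivity of ranks in an exact complex of free modules---obtained by localizing at the minimal primes of $S$, where the complex becomes one of vector spaces over the residue fields---and the fact that $I(\phi_i)$ acquires a maximal minor that is a unit precisely at those primes $\fp$ with $I(\phi_i)\not\subseteq\fp$.

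For the direction ``exact $\Rightarrow$ (1) and (2)'', condition (1) is exactly the rank additivity just mentioned: over the residue field at a minimal prime one reads $\rank(F_i)=\dim\Ker\phi_i+\dim\im\phi_i=\rank(\phi_{i+1})+\rank(\phi_i)$. For condition (2) I would argue by contraposition. If $\depth I(\phi_i)<i$, pick a prime $\fp\supseteq I(\phi_i)$ witnessing this depth; then over $S_\fp$ every maximal minor of $\phi_i$ lies in the maximal ideal, so $\phi_i$ drops rank modulo $\fp$, and an Auslander--Buchsbaum / Koszul-homology computation at $\fp$ produces nonzero homology at $F_i$, contradicting exactness.

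The substantial direction is ``(1) and (2) $\Rightarrow$ exact.'' Localize at an arbitrary prime $\fp$ and split into two cases. If $I(\phi_i)\not\subseteq\fp$ for every $i$, then each $\phi_i$ has a maximal minor that is a unit in $S_\fp$; a change of basis writes $\phi_i$ as an isomorphism onto a free summand together with a zero map, so the localized complex is a direct sum of trivially exact two-term pieces, hence exact. Otherwise I invoke the acyclicity lemma of Peskine and Szpiro: the grade bounds $\depth I(\phi_i)\ge i$ force enough depth in the free modules $F_i$ that any nonvanishing higher homology would have to have depth $0$, while those same bounds exclude depth-$0$ homology, so the higher homology vanishes. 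Ranging over all $\fp$ gives global exactness.

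I expect the acyclicity step to be the main obstacle. The delicate part is converting the purely ideal-theoretic hypothesis $\depth I(\phi_i)\ge i$ into vanishing of homology: one sets up an induction on $\depth S_\fp$ (equivalently on $\dim S$), uses prime avoidance to locate a nonzerodivisor inside $I(\phi_i)$, reduces modulo that regular element to drop the relevant depth by one, and checks that conditions (1) and (2)---including the compatibility of the determinantal ranks---are inherited by the quotient complex so that the inductive hypothesis applies. Managing this bookkeeping is where the real work lies.
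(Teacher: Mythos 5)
There is nothing in the paper to compare your argument against: Lemma \ref{BE1973} is not proved in the paper at all. It is quoted, with attribution, from Buchsbaum and Eisenbud \cite[Corollary 2]{BE1973} and is used purely as a black box in the proof of Theorem \ref{1abfree}. So your proposal is best judged as an attempted reconstruction of the original Buchsbaum--Eisenbud argument, and indeed the strategy you outline (localize at every prime; split off unit minors when $I(\phi_i)\not\subseteq \fp$; otherwise invoke the Peskine--Szpiro acyclicity lemma) is the classical route.

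Judged as a proof, however, the proposal has concrete gaps beyond the bookkeeping you explicitly defer. First, the claim that rank additivity can be ``read off'' over residue fields at minimal primes fails for non-reduced rings: at a minimal prime $\fp$ the ring $S_\fp$ is Artinian local but not a field, tensoring with $\kappa(\fp)$ is not exact, and the determinantal rank of $\phi_i$ over $S$ need not equal the rank of $\phi_i\otimes\kappa(\fp)$. Second, in the converse direction the assertion that ``the grade bounds force enough depth in the free modules $F_i$'' is false: $\depth\,(F_i)_\fp=\depth S_\fp$ no matter what the ideals $I(\phi_i)$ are. What the hypothesis actually buys is different: since a proper ideal of a Noetherian local ring has grade at most the depth of the ring, the condition $\depth I(\phi_i)\ge i$ forces $I(\phi_i)S_\fp=S_\fp$ for every $i>\depth S_\fp$, so the top of the localized complex splits off; one then still needs a Noetherian induction over $\Spec S$ to see that any surviving homology is supported at the closed point (hence of depth $0$) before the acyclicity lemma can be applied to the truncated complex of length at most $\depth S_\fp$. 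Third, the inductive scheme you sketch for the hard step---pass to $S/xS$ for a nonzerodivisor $x\in I(\phi_i)$ and check that conditions (1) and (2) are inherited by the quotient complex---does not work as formulated: maximal minors can vanish modulo $x$ (already for $\phi=(x)\colon S\to S$ the rank drops from $1$ to $0$), so the determinantal ranks, and with them condition (1), are not stable under this reduction; the standard proofs induct via localization precisely to avoid this. Since the step you yourself identify as ``where the real work lies'' is both undone and, as set up, would fail, the proposal remains an outline of the known theorem rather than a proof of it.
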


When the cokernel $M$ of $\phi_1$ in the exact complex 
\begin{equation*}
0\rightarrow F_n\xrightarrow{\phi_n}
F_{n-1}\xrightarrow{\phi_{n-1}
}\cdots\xrightarrow{\phi_2} F_1\xrightarrow{\phi_1}F_0
\end{equation*}
 is a graded module, the free modules are of the form $$F_i=\bigoplus_{j=1}^{\rank F_i} S(-d_{ij}).$$ The number of indices $j$ for which $d_{ij}=d\in \N W$ is called the graded $i$-Betti number of $M$ and is denoted by $\beta_{i,d}(M)$. All but finitely many elements $d\in \N W$ correspond to the trivial graded Betti number $\beta_{i,d}(M)=0$. 
 
 We use the following result to give the Hilbert series of $Y$ as a rational function using the minimal free resolution of $I(Y)$.

\begin{theorem}\cite[Theorem 8.20, Proposition 8.23]{MS2005}\label{MS} The Hilbert series of a finitely generated graded module $M$ over a polynomial ring $S=\F_q[x_1,\dots,x_r]$ positively graded by $\N W$ is a rational function given by,
\[HS_{M}(t)=\frac{\mathcal{K}_M(t)}{\prod\limits_{i=1}^{r}(1-t^{w_i})},\]
where the polynomial $\mathcal{K}_M(t)=\sum\limits_{i=0}^n\sum\limits_{d\in \N W}(-1)^i\beta_{i,d}(M)t^{d}$.
\end{theorem}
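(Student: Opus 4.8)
The plan is to derive the rational expression for $HS_M(t)$ from a finite graded free resolution of $M$, using only the additivity of the Hilbert function on short exact sequences of graded vector spaces. Since $S=\F_q[x_1,\dots,x_r]$ is a polynomial ring, the Hilbert Syzygy Theorem guarantees that the finitely generated graded module $M$ has a finite graded free resolution
\[
0\to F_n\xrightarrow{\phi_n}F_{n-1}\to\cdots\to F_1\xrightarrow{\phi_1}F_0\to M\to 0,
\]
with $n\le r$ and each $F_i=\bigoplus_j S(-d_{ij})$. Because the grading by $\N W$ is positive (all $w_i>0$), each graded piece $S_d$ is finite dimensional over $\F_q$; hence every $M_d$ is finite dimensional, the Hilbert function $H_M(d)=\dim_{\F_q}M_d$ is well defined degree by degree, and $HS_M(t)$ is a genuine formal power series.

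The first computation I would carry out is the Hilbert series of $S$ itself. The monomials $x_1^{m_1}\cdots x_r^{m_r}$ form an $\F_q$-basis of $S$, and such a monomial has degree $m_1w_1+\cdots+m_rw_r$, so the series factors into geometric series:
\[
HS_S(t)=\sum_{(m_1,\dots,m_r)\in\N^r}t^{\,m_1w_1+\cdots+m_rw_r}=\prod_{i=1}^r\frac{1}{1-t^{w_i}}.
\]
A degree shift multiplies the series by the corresponding monomial, so $HS_{S(-d)}(t)=t^d\,HS_S(t)$, and therefore
\[
HS_{F_i}(t)=\Bigl(\sum_j t^{d_{ij}}\Bigr)HS_S(t)=\Bigl(\sum_{d\in\N W}\beta_{i,d}(M)\,t^d\Bigr)HS_S(t),
\]
where the last equality simply groups the exponents $d_{ij}$ by their common value, using that $\beta_{i,d}(M)$ counts the indices $j$ with $d_{ij}=d$.

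Next I would split the resolution into the short exact sequences $0\to\im\phi_{i+1}\to F_i\to\im\phi_i\to 0$ furnished by exactness at each $F_i$ (with $\im\phi_0$ read as $M$). In each fixed degree $d$ this is a short exact sequence of finite dimensional $\F_q$-vector spaces, on which $\dim_{\F_q}$ is additive; summing these identities against $t^d$ and telescoping gives the alternating-sum formula
\[
HS_M(t)=\sum_{i=0}^n(-1)^i\,HS_{F_i}(t).
\]
Substituting the expression for $HS_{F_i}(t)$ and pulling out the common factor $HS_S(t)$ then yields
\[
HS_M(t)=\frac{\sum_{i=0}^n\sum_{d\in\N W}(-1)^i\beta_{i,d}(M)\,t^d}{\prod_{i=1}^r(1-t^{w_i})}=\frac{\mathcal{K}_M(t)}{\prod_{i=1}^r(1-t^{w_i})},
\]
as claimed.

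The only substantive input is the existence of a finite graded free resolution, i.e. the Hilbert Syzygy Theorem; this is exactly what forces the numerator $\mathcal{K}_M(t)$ to be an honest polynomial rather than an infinite series, and everything after it is formal bookkeeping with power series. The point I would treat most carefully is the passage to the $\N W$-grading: one must check that positivity of the weights guarantees finite dimensional graded components (so that the Hilbert series and the additivity argument are legitimate) and that the resolution is graded, so that the Betti numbers $\beta_{i,d}(M)$ appearing in $\mathcal{K}_M(t)$ are well defined.
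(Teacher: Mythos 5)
Your proof is correct, but note that the paper itself offers no proof to compare against: Theorem \ref{MS} is quoted as a known result from Miller and Sturmfels \cite{MS2005} (Theorem 8.20 and Proposition 8.23) and is used purely as a tool. Your argument --- finite graded free resolution via the Hilbert Syzygy Theorem, the computation $HS_S(t)=\prod_i (1-t^{w_i})^{-1}$, the shift rule $HS_{S(-d)}(t)=t^d HS_S(t)$, and the telescoping alternating sum over the short exact sequences $0\to\im\phi_{i+1}\to F_i\to\im\phi_i\to 0$ --- is essentially the standard proof given in that reference, and you correctly flag the one point that needs care in the weighted setting, namely that positivity of the grading makes every graded piece finite dimensional so that the series and the degreewise additivity argument are legitimate.
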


\section{Algebraic Invariants relevant to Weighted Projective Reed-Muller codes}\label{sec:alginv}
In this section, we compute algebraic invariants of the vanishing ideal. More precisely, we obtain the minimal free resolution as well as the Hilbert series of this ideal and use them to get the values of the Hilbert function. As a consequence, the regularity set of $Y=\P(1,1,b)(\F_q)$ is given which plays a crucial role in eliminating trivial (full) codes. We conclude the section by presenting the dimensions of the codes. 

\begin{theorem}\cite[Corollary 5.8]{MS2022}\label{1abideal} Let $a,b$ be positive integers with $\gcd{(a,b)}=1$, and $X=\P(1,a,b)$ be weighted projective plane over a finite field. Then, the vanishing ideal $I(X(\F_q))$ of $X(\F_q)$ is generated minimally by
\begin{equation*}
f_1=x_2^{(q-1)b+1}x_3-x_2x_3^{(q-1)a+1}, f_2=x_1^{(q-1)b+1}x_3-x_1x_3^{q} \text{ and } f_3=x_1^{(q-1)a+1}x_2-x_1x_2^{q}.
\end{equation*} 
\end{theorem}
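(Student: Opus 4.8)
Write $J=\langle f_1,f_2,f_3\rangle$ and $Y=X(\F_q)$. The plan is to prove the two inclusions $J\subseteq I(Y)$ and $I(Y)\subseteq J$ separately, and then to check minimality of the generating set. For the first inclusion I would begin by recording that each $f_i$ is homogeneous: $\deg f_1=(q-1)ab+a+b$, $\deg f_2=qb+1$, and $\deg f_3=qa+1$. Homogeneity guarantees that the vanishing of $f_i$ at a weighted projective point is independent of the chosen representative, since $f_i(\lambda^{w_1}x_1,\lambda^{w_2}x_2,\lambda^{w_3}x_3)=\lambda^{\deg f_i}f_i(x_1,x_2,x_3)$. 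Because every point of $Y$ has a representative with coordinates in $\F_q$, it suffices to evaluate at such representatives. Using $c^q=c$, and hence $c^{(q-1)e+1}=c$, for every $c\in\F_q$ and every positive integer $e$, each binomial collapses; for instance $f_2$ becomes $x_1x_3-x_1x_3=0$, and likewise for $f_1$ and $f_3$. This gives $J\subseteq I(Y)$.

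For the reverse inclusion I would argue geometrically, first identifying the common zero locus $V(J)\subseteq X(\K)$. The generators factor as $f_2=x_1x_3\bigl(x_1^{(q-1)b}-x_3^{q-1}\bigr)$, $f_3=x_1x_2\bigl(x_1^{(q-1)a}-x_2^{q-1}\bigr)$, and $f_1=x_2x_3\bigl(x_2^{(q-1)b}-x_3^{(q-1)a}\bigr)$, so on each stratum (indexed by which coordinates vanish) the equations reduce to root-of-unity conditions on the surviving coordinates. On the big torus one normalizes $x_1=1$, which is possible and \emph{unique} precisely because $w_1=1$, and the equations become $x_2^{q-1}=x_3^{q-1}=1$, i.e. $x_2,x_3\in\F_q^{*}$. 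On the boundary strata $x_1=0$, $x_2=0$, or $x_3=0$, a residual $\mu_a$- or $\mu_b$-action survives the normalization, and here the hypothesis $\gcd(a,b)=1$ is exactly what makes the map $\lambda\mapsto\lambda^{b}$ (resp. $\lambda\mapsto\lambda^{a}$) a bijection of the relevant group of roots of unity, so that the orbits match the $\F_q$-rational points. Carrying this out on every stratum should yield $V(J)=Y$ set-theoretically.

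The main obstacle is upgrading the set-theoretic equality $V(J)=Y$ to the ideal equality $J=I(Y)$. Since $Y$ is a finite set of distinct points, $\sqrt{J}=I(V(J))=I(Y)$, so $S/J$ has Krull dimension one and is equidimensional. One clean route would be to prove $J$ radical and then invoke the Nullstellensatz directly; however, the obvious monomial orders give a leading term such as $x_2^{(q-1)b+1}x_3$ for $f_1$, so the initial ideal is not squarefree and radicality must be obtained another way. The route I would instead pursue is homological: with $J\subseteq I(Y)$ and $\sqrt J=I(Y)$ in hand, I would show that $\deg(S/J)=|Y|=q^2+q+1$ and that $\depth(S/J)\ge 1$. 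The depth statement makes $S/J$ Cohen--Macaulay of dimension one, hence free of nonzero finite-length submodules; the degree statement makes $I(Y)/J$ a module with vanishing Hilbert polynomial, hence of finite length. Together these force $I(Y)/J=0$, that is $J=I(Y)$. Producing a homogeneous nonzerodivisor on $S/J$ is the technical heart here: it cannot be linear, since $S_1$ is spanned by $x_1$, which vanishes on the stratum $x_1=0$ of $Y$, so one must exhibit a higher-degree nonzerodivisor, and this is where the exponents and $\gcd(a,b)=1$ re-enter.

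For minimality I would show that no $f_i$ lies in the ideal generated by the other two. When $a,b>1$ and $a\neq b$ the three degrees are distinct with $\deg f_1$ the largest, as $\deg f_1-\deg f_2=(a-1)\bigl((q-1)b+1\bigr)$ and $\deg f_1-\deg f_3=(b-1)\bigl((q-1)a+1\bigr)$; then $f_1$ cannot help generate $f_2$ or $f_3$, and $f_2\notin\langle f_3\rangle$ because $x_2\mid f_3$ while $x_2\nmid f_2$ (and symmetrically), so each generator is irredundant. The collision cases $a=1$ or $b=1$, where two degrees coincide, would be settled by the same support/divisibility bookkeeping, or equivalently by verifying that the leading terms of $f_1,f_2,f_3$ minimally generate the initial ideal, whence the binomials minimally generate $J$. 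This completes the plan.
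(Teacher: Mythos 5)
This statement is not proved in the paper at all: it is imported verbatim from \cite[Corollary 5.8]{MS2022}, so your proposal has to stand on its own, and it does not yet do so. Your first inclusion $J\subseteq I(Y)$ is complete and correct, and your minimality argument is essentially right (in fact simpler than you make it: every term of $f_2$ and $f_3$ is divisible by $x_1$ while no term of $f_1$ is, every term of $f_1$ and $f_3$ is divisible by $x_2$ while no term of $f_2$ is, and every term of $f_1$ and $f_2$ is divisible by $x_3$ while no term of $f_3$ is; this settles all three non-memberships uniformly, with no case split on $a,b$ and no degree bookkeeping). The genuine gap is the reverse inclusion $I(Y)\subseteq J$, which is the entire content of the theorem and which you reduce to two claims you never establish: that $\deg(S/J)=q^2+q+1$ and that $\depth(S/J)\ge 1$. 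The depth claim is exactly the assertion that the irrelevant maximal ideal is not an associated prime of $J$ (equivalently, that $J$ is saturated); you correctly note that $x_1$ cannot serve as the required nonzerodivisor, since it lies in the minimal prime $I([0:1:0])$, but you offer no substitute, and exhibiting one is not appreciably easier than the theorem itself. The degree computation for $S/J$ is likewise nontrivial before one knows $J=I(Y)$. So what you have is a valid reduction strategy, not a proof: the two statements that would make it work are precisely the ones left open. (They do follow from the resolution in Theorem \ref{1abfree} together with Auslander--Buchsbaum, but that theorem takes the present statement as its input, so invoking it would be circular within this paper's logic.)

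A secondary problem: your set-theoretic identification $V(J)=Y$ on the stratum $x_1=0$, $x_2x_3\neq 0$ relies on the wrong mechanism. That $\lambda\mapsto\lambda^a$ is a bijection of $\mu_b$ does not make the residual orbits match the rational points: for $q=3$, $a=1$, $b=2$, the point $[0:i:1]$ with $i^2=-1$ satisfies $f_1=0$, and its orbit under the residual $\mu_2$-action (after normalizing $x_3=1$) is $\{\pm i\}$, which contains no $\F_3$-point; yet the point \emph{is} rational, since the scaling $\lambda=i\notin\mu_2$ gives $[0:i:1]=[0:2:2]$. The correct argument is B\'ezout-flavored: write $ua+vb=1$, set $\alpha=x_2^{q-1}$, $\beta=x_3^{q-1}$, so that $f_1=0$ gives $\alpha^b=\beta^a$; then $\mu=\alpha^{-u}\beta^{-v}$ satisfies $\mu^a=\alpha^{-1}$ and $\mu^b=\beta^{-1}$, and any $\lambda\in\K^*$ with $\lambda^{q-1}=\mu$ rescales the point into $\F_q^3$. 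This is where $\gcd(a,b)=1$ actually enters, and it is a step you would need to write out rather than gesture at.
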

\begin{corollary}\label{11bideal}
Let $a$ be a positive integer and $X=\P(1,1,b)$ be weighted projective space over a finite field. Then, the ideal of the set $X(\F_q)$ is $I(X(\F_q))=\langle f_1,f_2,f_3\rangle$,
where \begin{equation*}
 f_1=x_2^{(q-1)b+1}x_3-x_2x_3^{q}, f_2=x_1^{(q-1)b+1}x_3-x_1x_3^{q} \text{ and } f_3=x_1^qx_2-x_1x_2^{q}.    
\end{equation*}\end{corollary}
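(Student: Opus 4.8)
The plan is to deduce this statement directly from Theorem~\ref{1abideal} by specializing the weight $a$ to $1$. The first step is to observe that $\P(1,1,b)$ is precisely the weighted projective plane $\P(1,a,b)$ in the case $a=1$, and that the coprimality hypothesis $\gcd(a,b)=1$ of Theorem~\ref{1abideal} is automatically satisfied here, since $\gcd(1,b)=1$ for every positive integer $b$. Thus $\P(1,1,b)$ is well-formed and Theorem~\ref{1abideal} applies verbatim, describing $I(X(\F_q))$ through its three binomial generators once they are read with $a=1$.

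The second step is to simplify the exponents appearing in those generators. Using the identity $(q-1)\cdot 1 + 1 = q$, the first generator $x_2^{(q-1)b+1}x_3 - x_2 x_3^{(q-1)a+1}$ collapses to $x_2^{(q-1)b+1}x_3 - x_2 x_3^{q}$; the second generator $x_1^{(q-1)b+1}x_3 - x_1 x_3^{q}$ is unaffected, as it does not involve $a$; and the third generator $x_1^{(q-1)a+1}x_2 - x_1 x_2^{q}$ becomes $x_1^{q}x_2 - x_1 x_2^{q}$. These agree term by term with the polynomials $f_1,f_2,f_3$ in the statement, so $I(X(\F_q))=\langle f_1,f_2,f_3\rangle$ follows immediately.

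Since the argument is a pure substitution into an already-established result, there is no genuine obstacle to overcome. The only point deserving explicit attention is checking that the hypothesis $\gcd(a,b)=1$ of Theorem~\ref{1abideal} does not fail at $a=1$; as noted above this is trivial, so the specialization is legitimate. Consequently the same three binomials remain a generating set, and their minimality is inherited from the theorem.
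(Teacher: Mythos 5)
Your proposal is correct and coincides with the paper's (implicit) argument: the corollary is obtained exactly by specializing $a=1$ in Theorem~\ref{1abideal}, noting $\gcd(1,b)=1$ holds automatically and $(q-1)\cdot 1+1=q$ simplifies the exponents to the stated $f_1,f_2,f_3$. Nothing further is needed.
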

\begin{theorem}\label{1abfree} Denote by $I$ the ideal $I(X(\F_q))$ given by Theorem \ref{1abideal} for the weighted projective plane $X=\P(1,a,b)$. Then, the following complex is a graded minimal free resolution of $S/I$
\begin{equation*}
0\rightarrow \bigoplus_{j=1}^{2} S(-\sigma_j) \xrightarrow{\begin{bmatrix}
x_1& 0\\
A_1& f_3/x_1\\
A_2& -f_2/x_1
\end{bmatrix}} \bigoplus_{j=1}^{3} S(-\lambda_j)\xrightarrow{\begin{bmatrix}f_1 &f_2 &f_3\end{bmatrix}
}
S\rightarrow S/I\rightarrow0
\end{equation*} where $\lambda_{1}=(q-1)ab+a+b,\quad  \lambda_{2}=qb+1, \quad \lambda_{3}=qa+1$, 
\[
\begin{array}{llll}
  &A_1=-\sum\limits_{i=1}^{a}x_1^{(i-1)(q-1)b}x_2x_3^{(q-1)a-iq+i},\quad   &A_2=\sum\limits_{i=1}^{b}x_1^{(i-1)(q-1)a}x_2^{(q-1)b-iq+i}x_3,\\
  &\sigma_{1}=(q-1)ab+a+b+1, \quad &\sigma_{2}=qb+qa+1.
  \end{array}
\]
\end{theorem}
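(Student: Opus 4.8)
The plan is to apply the Buchsbaum--Eisenbud acyclicity criterion (Lemma \ref{BE1973}) to the displayed complex and then read off minimality from the degrees of the matrix entries. Write $\phi_1=[f_1\ f_2\ f_3]$ and let $\phi_2$ denote the $3\times 2$ matrix. First I would check that we genuinely have a complex, i.e.\ that each column of $\phi_2$ is a syzygy of $(f_1,f_2,f_3)$. Since both $f_2$ and $f_3$ are divisible by $x_1$, the entries $f_2/x_1$ and $f_3/x_1$ are honest polynomials, and the second column is the trivial (Koszul) syzygy $(f_3/x_1)f_2-(f_2/x_1)f_3=0$. The real content is the first column, namely the identity
\[
x_1 f_1+A_1 f_2+A_2 f_3=0 .
\]
I would verify this by direct expansion: after factoring $f_2=x_1x_3\bigl(x_1^{(q-1)b}-x_3^{q-1}\bigr)$ and $f_3=x_1x_2\bigl(x_1^{(q-1)a}-x_2^{q-1}\bigr)$, each of the products $A_1f_2$ and $A_2f_3$ collapses to a telescoping sum, and the two surviving end terms combine to $-x_1f_1$. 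A preliminary check that the exponents $(q-1)(a-i)$ and $(q-1)(b-i)$ occurring in $A_1,A_2$ are nonnegative, together with the degree bookkeeping giving $\sigma_1=\lambda_1+1$ and $\sigma_2=qa+qb+1$, confirms that all maps are homogeneous of the stated degrees.

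With the complex property in hand, exactness follows from Lemma \ref{BE1973}. The rank conditions are immediate: $\rank\phi_1=1$ (as $f_1\neq0$) and $\rank\phi_2=2$, so $\rank\phi_2+\rank\phi_1=3=\rank F_1$ and, with $\phi_3=0$, $\rank\phi_2=2=\rank F_2$. For the depth conditions I would compute the ideals of maximal minors. Clearly $I(\phi_1)=(f_1,f_2,f_3)=I$. The three $2\times2$ minors of $\phi_2$ are, up to sign, $f_3$, $f_2$, and $-(A_1f_2+A_2f_3)/x_1=f_1$, the last equality being exactly the first-column syzygy above; hence $I(\phi_2)=I$ as well. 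Since $Y=X(\F_q)$ is a nonempty finite set of points, its affine cone is one-dimensional, so $\dim S/I=1$ and $\h I=3-1=2$. As $S$ is Cohen--Macaulay, the grade of $I$ equals $\h I=2$, so $I$ contains a regular sequence of length $2$. Thus $I(\phi_1)$ contains an $S$-sequence of length $1$ and $I(\phi_2)$ one of length $2$, and the criterion yields exactness.

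Finally, minimality is automatic once exactness is known: every entry of $\phi_1$ and $\phi_2$ is homogeneous of strictly positive degree (the entries being $f_1,f_2,f_3$, together with $x_1$, $A_1$, $A_2$, and $f_2/x_1,f_3/x_1$ of degrees $qb,qa$), hence lies in the maximal homogeneous ideal $\fm=(x_1,x_2,x_3)$, so no unit appears in either matrix. I expect the only genuinely delicate step to be the telescoping verification of $x_1f_1+A_1f_2+A_2f_3=0$; once the two sums are seen to collapse, everything else is formal, and the very same identity conveniently supplies the third $2\times2$ minor needed for the depth estimate.
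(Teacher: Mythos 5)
Your proposal is correct, and it rests on the same key tool as the paper's proof, namely the Buchsbaum--Eisenbud criterion (Lemma \ref{BE1973}), with the same rank count and the same computation showing that the $2\times 2$ minors of $\phi_2$ are $\pm f_1,\pm f_2,\pm f_3$, so that $I(\phi_1)=I(\phi_2)=I$. The differences are in how the remaining hypotheses are discharged, and they are worth comparing. First, you explicitly verify that the displayed sequence is a complex, via the telescoping identity $x_1f_1+A_1f_2+A_2f_3=0$ (which indeed holds: $A_1f_2=-x_1x_2x_3(x_1^{(q-1)ab}-x_3^{(q-1)a})$ and $A_2f_3=x_1x_2x_3(x_1^{(q-1)ab}-x_2^{(q-1)b})$); the paper never checks this, even though it is a logical prerequisite for applying the criterion, and it only enters implicitly through the assertion that the minor on the last two rows equals $\mp f_1$. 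Second, for the depth condition on $I(\phi_2)$ the paper exhibits the explicit sequence $\{f_1+f_2,f_2+f_3\}$ and argues regularity from the (asserted, not proved) fact that these two polynomials share no common factor, whereas you invoke dimension theory: $\dim S/I=1$ for the vanishing ideal of a nonempty finite set of points, hence $\operatorname{height} I=2$, and grade equals height since $S$ is Cohen--Macaulay. Your route is shorter and avoids the unverified coprimality claim, at the cost of citing the grade--height equality for Cohen--Macaulay rings; if you want it fully self-contained, note that $\dim S/I=1$ follows from $1\le H_Y(d)\le |Y|$ for all large $d$ in $\N W$. Third, you address minimality (all matrix entries lie in $\fm$ since they are homogeneous of positive degree), which the paper omits entirely, spending its effort instead on the degree bookkeeping that shows $\phi_1,\phi_2$ are graded maps --- a point you compress into the observations $\sigma_1=\lambda_1+1$ and $\sigma_2-\lambda_2=qa$, $\sigma_2-\lambda_3=qb$. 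Net effect: your write-up is, if anything, more complete than the paper's on the two points (complex property and minimality) that the theorem statement actually requires.
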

\begin{corollary}\label{11afree}
Let $I$ be the ideal of $X(\F_q)$ given by Corollary \ref{11bideal} for the weighted projective space $X=\P(1,1,b)$. Then, the following complex is a graded minimal free resolution of $S/I$ 
\begin{equation*}
0\rightarrow \bigoplus_{j=1}^{2} S(-\sigma_j) \xrightarrow{\begin{bmatrix}
x_1& 0\\
-x_2x_3^{q-1}& f_3/x_1\\
A& -f_2/x_1
\end{bmatrix}}\bigoplus_{j=1}^{3} S(-\lambda_j)\xrightarrow{\begin{bmatrix}f_1 &f_2 &f_3\end{bmatrix}
}S\rightarrow S/I\rightarrow0
\end{equation*} where 
\begin{align*}
A&=\sum\limits_{i=1}^{b}x_1^{(i-1)(q-1)}x_2^{(q-1)b-iq+i}x_3,\quad
&\sigma_{1}=qb+2, \quad \quad&\sigma_{2}=qb+q+1,\\
\lambda_{1}&=qb+1,\quad &\lambda_{2}=qb+1, \quad\quad
&\lambda_{3}=q+1.
\end{align*}

\end{corollary}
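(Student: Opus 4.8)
The plan is to deduce Corollary~\ref{11afree} as the special case $a=1$ of Theorem~\ref{1abfree}. Since $\gcd(1,b)=1$ for every positive integer $b$, the hypothesis of Theorem~\ref{1abfree} is automatically satisfied, and the generators listed in Corollary~\ref{11bideal} are precisely those of Theorem~\ref{1abideal} with $a=1$ (the exponent $(q-1)a+1$ occurring in $f_1$ and $f_3$ becomes $q$). It therefore suffices to substitute $a=1$ into the resolution of Theorem~\ref{1abfree} and to check that the twists and the entries of the first syzygy matrix simplify to the stated ones.

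First I would record the twists: at $a=1$ one gets $\lambda_1=(q-1)b+1+b=qb+1$, $\lambda_2=qb+1$, $\lambda_3=q+1$, together with $\sigma_1=(q-1)b+1+b+1=qb+2$ and $\sigma_2=qb+q+1$, matching the statement. The entries $f_3/x_1$ and $-f_2/x_1$ of the second column are insensitive to the substitution. In the first column, the sum $A_2$ specializes verbatim to $A=\sum_{i=1}^{b}x_1^{(i-1)(q-1)}x_2^{(q-1)b-iq+i}x_3$, while the sum $A_1$ collapses to its single term $i=1$; since the exponent of $x_3$ there is $(q-1)a-iq+i=(q-1)-q+1=0$, that entry reduces to the degree-one form $-x_2$. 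This is consistent with the grading, which forces the $(2,1)$ entry to be homogeneous of degree $\sigma_1-\lambda_2=1$.

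For a self-contained verification --- the route I would take if I did not wish to invoke Theorem~\ref{1abfree} --- I would check the two column relations directly and then apply Lemma~\ref{BE1973}. The second column is the trivial syzygy $(f_3/x_1)\,f_2-(f_2/x_1)\,f_3=0$, valid because $x_1$ divides both $f_2$ and $f_3$. The first column asserts $x_1f_1-x_2f_2+Af_3=0$; writing $f_3=x_1x_2\bigl(x_1^{q-1}-x_2^{q-1}\bigr)$ and summing the geometric series $\sum_{j=0}^{b-1}u^jv^{b-1-j}=(u^b-v^b)/(u-v)$ with $u=x_1^{q-1}$ and $v=x_2^{q-1}$ gives $Af_3=x_1x_2x_3\bigl(x_1^{(q-1)b}-x_2^{(q-1)b}\bigr)$, which exactly cancels $x_1f_1-x_2f_2=x_1x_2x_3\bigl(x_2^{(q-1)b}-x_1^{(q-1)b}\bigr)$. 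Minimality is clear since all entries lie in $(x_1,x_2,x_3)$. The only real obstacle in this independent route is condition~(2) of Lemma~\ref{BE1973} for the rightmost map, namely exhibiting a regular sequence of length two inside the ideal of $2\times2$ minors of the syzygy matrix; the rank equalities and condition~(2) for $[\,f_1\ f_2\ f_3\,]$ are routine. Deducing the result from Theorem~\ref{1abfree}, however, avoids this step entirely, so the genuine work is just the bookkeeping that collapses $A_1$ and $A_2$.
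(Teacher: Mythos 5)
Your proposal follows exactly the paper's (implicit) route: the paper offers no separate proof of Corollary~\ref{11afree}, presenting it as the specialization $a=1$ of Theorem~\ref{1abfree}, which is precisely what you do. Your bookkeeping of the twists is correct ($\lambda_1=\lambda_2=qb+1$, $\lambda_3=q+1$, $\sigma_1=qb+2$, $\sigma_2=qb+q+1$), and your independent check of the first column relation, writing $f_3=x_1x_2(x_1^{q-1}-x_2^{q-1})$ and summing the geometric series to get $Af_3=x_1x_2x_3\bigl(x_1^{(q-1)b}-x_2^{(q-1)b}\bigr)$, is also accurate; so is your remark that the self-contained route would still owe the regular-sequence verification for $I(\phi_2)$ (the paper handles this in the proof of Theorem~\ref{1abfree} by observing that the $2\times 2$ minors are $\pm f_1,\pm f_2, f_3$, so $I(\phi_2)=I$, and exhibiting the regular sequence $\{f_1+f_2,\,f_2+f_3\}$).

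One point, however, must be made explicit rather than passed over: the matrix you derive is \emph{not} the matrix printed in the corollary. Setting $a=1$ in $A_1$ gives the $(2,1)$ entry $-x_2$ (the exponent of $x_3$ is $(q-1)-q+1=0$, as you note), whereas the statement prints $-x_2x_3^{q-1}$. Your entry is the correct one and the printed one is a typo: the grading forces that entry to have weighted degree $\sigma_1-\lambda_2=(qb+2)-(qb+1)=1$, which $-x_2$ has and $-x_2x_3^{q-1}$ (of degree $1+(q-1)b$) does not; moreover, with $-x_2x_3^{q-1}$ the composition $\phi_1\phi_2$ is no longer zero, since $x_1f_1-x_2x_3^{q-1}f_2+Af_3=x_1^{(q-1)b+1}x_2x_3-x_1^{(q-1)b+1}x_2x_3^{q}+x_1x_2x_3^{2q-1}-x_1x_2x_3^{q}\neq 0$, so the printed sequence is not even a complex. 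As written, your proof silently establishes a corrected statement while claiming to prove the stated one; you should flag the discrepancy and justify (by the degree count and by $\phi_1\phi_2=0$, as above) that $-x_2$ is the entry that makes the corollary true.
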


\begin{proofA} Let $\phi_1:=\begin{bmatrix}f_1 &f_2 &f_3\end{bmatrix}$ and $\phi_{2}:={\begin{bmatrix}
x_1& 0\\
A_1& f_3/x_1\\
A_2& -f_2/x_1
\end{bmatrix}}$ where 
\[f_1=x_2^{(q-1)b+1}x_3-x_2x_3^{(q-1)a+1}, f_2=x_1^{(q-1)b+1}x_3-x_1x_3^{q}, f_3=x_1^{(q-1)a+1}x_2-x_1x_2^{q} \text{ and} 
\]
\[A_1=-\sum\limits_{i=1}^{a}x_1^{(i-1)(q-1)b}x_2x_3^{(q-1)a-iq+i}, \quad A_2=\sum\limits_{i=1}^{b}x_1^{(i-1)(q-1)a}x_2^{(q-1)b-iq+i}x_3.
\]
We will use Lemma \ref{BE1973} to show the exactness of the following complex of graded modules.
 \begin{equation}\label{grRes}
0\rightarrow F_2 \xrightarrow{\phi_2} F_1\xrightarrow{\phi_1}S\rightarrow S/I\rightarrow 0,
\end{equation} where $F_2=S(-\sigma_1)\oplus S(-\sigma_2)$, $F_1=S(-\lambda_1)\oplus S(-\lambda_2)\oplus S(-\lambda_3)$.  
It is easily seen that rank$(\phi_2)=2$, rank$(\phi_1)=1$ and rank$(S^3)=3$. So we get $$\mbox{rank}(S^3)=\mbox{rank}(\phi_2)+\mbox{rank}(\phi_1).$$ Now we need to verify the second item in Lemma \ref{BE1973}. $I(\phi_1)$ is clearly $I=\langle f_1,f_2,f_3\rangle$ which contains a polynomial that is not a zero-divisor on $S$.  The minors of $\phi_2$ are
$$\begin{vmatrix} A_1& f_3/x_1\\
A_2& -f_2/x_1 
\end{vmatrix}=-f_1\neq 0,  \begin{vmatrix} x_1 & 0\\ 
A_2& -f_2/x_1 
\end{vmatrix}=-f_2\neq 0 \text{ and } 
\begin{vmatrix}
x_1 & 0\\ 
A_1 & f_3/x_1
\end{vmatrix}=f_3\neq 0.$$ Thus, $I(\phi_2)$ is again the ideal  $I=\langle f_1,f_2,f_3\rangle$ which contains the regular sequence $\{f_1+f_2,f_2+f_3\}$. This is true because if $f_2+f_3$ was a zero divisor in $S/\langle f_1+f_2 \rangle$, then there would be $g\in S\setminus \langle f_1+f_2 \rangle$ such that $(f_2+f_3)g\in \langle f_1+f_2 \rangle$, that is there would be $h\in S$ with $h(f_1+f_2)=g(f_2+f_3)$, which is clearly not possible. Because if there would be $h\in S$ satisfying this condition, these two polynomials $f_1+f_2$ and $f_2+f_3$ would have a common factor. But these two polynomials do not have a common factor. So the free resolution given in Theorem \ref{1abfree} is an exact sequence by Lemma \ref{BE1973}.

Now let us prove that the resolution is graded, that is, the maps $\phi_1$ and $\phi_2$ preserve degrees. We say that $\phi:M=\bigoplus M_d\rightarrow N=\bigoplus N_d$ is graded (or degree preserving) if $\phi(M_d)\subseteq N_d,$ for all $d\in \N W$, (see, \cite[Definition 8.12, Page 153]{MS2005}). So, we need to show that $\phi_1({(F_1)}_d)\subseteq {(F_0)}_d$. Recall that
\begin{eqnarray*}
 \phi_1:& \displaystyle F_1=\bigoplus_{j=1}^{3} S(-\lambda_j) \xrightarrow{\begin{bmatrix}f_1&f_2&f_3\end{bmatrix}}F_0=S\\
&(m_1,m_2,m_3)\rightarrow m_1f_1+m_2f_2+m_3f_3.
\end{eqnarray*}
If $\displaystyle (m_1,m_2,m_3)\in (F_1)_d= S_{d-\lambda_1}\oplus S_{d-\lambda_2} \oplus S_{d-\lambda_3}$ for $d\in \N W$, then we have
\[\deg(m_jf_j)=(d-\lambda_j)+\deg(f_j)=d, \text{ since } \lambda_j=\deg(f_j) \text{ for all } j=1,2,3.
\] 
Hence, $\phi_1$ is graded. Consider now the following map.
\begin{eqnarray*}
\phi_2:&\displaystyle F_2=\bigoplus_{j=1}^{2} S(-\sigma_j) \xrightarrow{\begin{bmatrix}
x_1& 0\\
A_1& f_3/x_1\\
A_2& -f_2/x_1
\end{bmatrix}}F_1=\bigoplus_{j=1}^{3} S(-\lambda_j) \\
&(m_1,m_2)\rightarrow(x_1m_1+0m_2, A_1m_1+m_2f_3/x_1, A_2m_1-m_2f_2/x_1)
\end{eqnarray*}
So as to show that $\phi_2({(F_2)}_d)\subseteq {(F_1)}_d$, we take $(m_1,m_2)\in {(F_2)}_d=S_{d-\sigma_1}\oplus S_{d-\sigma_2}$.

By the definition of the map $\phi_2$, we have 
\begin{eqnarray*}
 \deg(x_1m_1)=\deg(x_1)+\deg(m_1)=1+d-((q-1)ab+q+b+1)=d-\lambda_1.   
\end{eqnarray*}
Also,
\[\deg(A_1m_1)=\deg(m_2f_3/x_1)=qa+d-(qb+qa+1)=d-(qb+1)=d-\lambda_2.\]
And similarly, we have
\[\deg(A_2m_2)=\deg(-m_2f_2/x_1)=qb+d-(qa+qb+1)=d-(qa+1)=d-\lambda_3.\]

Therefore we get, \[ \phi_2({(F_2)}_d)=\phi_2(S_{d-\sigma_1}\oplus S_{d-\sigma_2})\subseteq S_{d-\lambda_1}\oplus S_{d-\lambda_2}\oplus S_{d-\lambda_3}=(F_1)_d\]
This proves $\phi_2$ is graded completing the proof.
\end{proofA}

 \begin{proposition}\label{hilbertseries}
 The formula of the Hilbert series of $Y=\P(1,a,b)(\F_q)$ is given as follows.
\[
HS_{Y}(t)=\frac{1-t^{qa+1}-t^{qb+1}-t^{(q-1)ab+a+b}+t^{qa+qb+1}+t^{(q-1)ab+a+b+1}}{(1-t)(1-t^a)(1-t^b)}.
\]
 \end{proposition}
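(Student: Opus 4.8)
The plan is to read the Hilbert series directly off the minimal free resolution via Theorem \ref{MS}. Since $Y=X(\F_q)$ for $X=\P(1,a,b)$ and the module in question is $M=S/I$ where $I=I(X(\F_q))$, the Hilbert series $HS_Y(t)=HS_{S/I}(t)$ is governed entirely by the graded Betti numbers $\beta_{i,d}(S/I)$. These I can extract with no computation from the resolution established in Theorem \ref{1abfree}, because that resolution is \emph{minimal}: each free module is a sum of twists $S(-d)$, and the multiplicity of $S(-d)$ in homological degree $i$ is exactly $\beta_{i,d}(S/I)$.

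First I would record the nonzero Betti numbers. From $F_0=S$ we get $\beta_{0,0}=1$; from $F_1=\bigoplus_{j=1}^{3}S(-\lambda_j)$ we get $\beta_{1,\lambda_1}=\beta_{1,\lambda_2}=\beta_{1,\lambda_3}=1$ with $\lambda_1=(q-1)ab+a+b$, $\lambda_2=qb+1$, $\lambda_3=qa+1$; and from $F_2=\bigoplus_{j=1}^{2}S(-\sigma_j)$ we get $\beta_{2,\sigma_1}=\beta_{2,\sigma_2}=1$ with $\sigma_1=(q-1)ab+a+b+1$ and $\sigma_2=qa+qb+1$. All other graded Betti numbers vanish. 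Substituting these into the defining formula $\mathcal{K}_M(t)=\sum_i\sum_d(-1)^i\beta_{i,d}(M)t^d$ of Theorem \ref{MS} collapses the double sum to the six terms
\[
\mathcal{K}_{S/I}(t)=1-t^{\lambda_1}-t^{\lambda_2}-t^{\lambda_3}+t^{\sigma_1}+t^{\sigma_2},
\]
with the signs dictated by the homological degrees $0,1,2$.

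Next I would identify the denominator. The grading weights are $w_1=1$, $w_2=a$, $w_3=b$, so $\prod_{i=1}^{3}(1-t^{w_i})=(1-t)(1-t^a)(1-t^b)$, which is exactly the denominator claimed in the proposition; Theorem \ref{MS} applies because $S$ is positively graded by $\N W=\langle 1,a,b\rangle$ (indeed $\N W=\N$ here) and $S/I$ is finitely generated. Dividing $\mathcal{K}_{S/I}(t)$ by this product and rewriting the exponents $\lambda_j,\sigma_j$ explicitly yields the stated numerator $1-t^{qa+1}-t^{qb+1}-t^{(q-1)ab+a+b}+t^{qa+qb+1}+t^{(q-1)ab+a+b+1}$ after merely reordering the six monomials.

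I do not expect a genuine obstacle here: the substantive work is carried entirely by Theorem \ref{1abfree} (which supplies and proves minimality of the resolution) and Theorem \ref{MS} (which turns a minimal resolution into a rational Hilbert series). The only points that require care are purely bookkeeping—matching the homological degree to the correct sign $(-1)^i$, and confirming that the twists $\lambda_j$ and $\sigma_j$ transcribe correctly into the numerator—so the ``hard part'' is simply verifying that the exponents in $\mathcal{K}_{S/I}(t)$ agree termwise with those in the proposition, which they do.
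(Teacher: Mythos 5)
Your proposal is correct and follows essentially the same route as the paper's own proof: both read the graded Betti numbers $\beta_{0,0}=1$, $\beta_{1,\lambda_j}=1$, $\beta_{2,\sigma_j}=1$ off the minimal free resolution of Theorem \ref{1abfree}, form $\mathcal{K}_{S/I}(t)=1-t^{\lambda_1}-t^{\lambda_2}-t^{\lambda_3}+t^{\sigma_1}+t^{\sigma_2}$, and divide by $(1-t)(1-t^a)(1-t^b)$ via Theorem \ref{MS}. The only difference is cosmetic: you make explicit the identifications $HS_Y=HS_{S/I}$ and $\N W=\N$, which the paper leaves implicit.
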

 As a direct consequence of Proposition \ref{hilbertseries}, we get the following result.
\begin{corollary}\label{corhilbertseries}
The formula of the Hilbert series of $Y=\P(1,1,b)(\F_q)$ is given as follows.
\begin{equation}\label{11bhs}
HS_{Y}(t)=\frac{1-t^{q+1}-2t^{qb+1}+t^{qb+2}+t^{qb+q+1}}{(1-t)(1-t)(1-t^b)}.
\end{equation}
\end{corollary}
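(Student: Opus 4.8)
The plan is to apply Theorem \ref{MS} to the graded $S$-module $M = S/I$, where $I = I(X(\F_q))$ is the vanishing ideal described in Theorem \ref{1abideal}. First I would record that the Hilbert series of $Y$ agrees with the Hilbert series of $S/I$: by Definition \ref{def:hil} we have $H_Y(d) = \dim_{\F_q} S_d - \dim_{\F_q} I_d(Y) = \dim_{\F_q}(S/I)_d$ for every $d \in \N W$, so $HS_Y(t) = \sum_{d \in \N W} H_Y(d)\,t^d = HS_{S/I}(t)$. This reduces the proposition to evaluating the rational expression furnished by Theorem \ref{MS} for $M = S/I$.

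Next, since $X = \P(1,a,b)$ carries the weights $w_1 = 1$, $w_2 = a$, $w_3 = b$, the denominator in Theorem \ref{MS} is immediately $\prod_{i=1}^{3}(1-t^{w_i}) = (1-t)(1-t^a)(1-t^b)$, which is exactly the denominator asserted in the statement. Hence the entire content of the proposition reduces to identifying the $K$-polynomial $\mathcal{K}_{S/I}(t)$ with the claimed numerator.

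For this I would read the graded Betti numbers of $S/I$ directly off the minimal free resolution of Theorem \ref{1abfree}. Because that resolution is graded and minimal, the twists appearing in the free modules are precisely the degrees supporting nonzero Betti numbers: $\beta_{0,0}=1$; the first syzygies give $\beta_{1,\lambda_j}=1$ for $\lambda_1=(q-1)ab+a+b$, $\lambda_2=qb+1$, $\lambda_3=qa+1$; and the second syzygies give $\beta_{2,\sigma_j}=1$ for $\sigma_1=(q-1)ab+a+b+1$, $\sigma_2=qa+qb+1$. Substituting into $\mathcal{K}_{S/I}(t)=\sum_{i,d}(-1)^i\beta_{i,d}(S/I)\,t^d$ yields
\[
\mathcal{K}_{S/I}(t) = 1 - t^{qa+1} - t^{qb+1} - t^{(q-1)ab+a+b} + t^{qa+qb+1} + t^{(q-1)ab+a+b+1},
\]
which is precisely the numerator in the statement, so the formula follows from Theorem \ref{MS}.

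There is no genuinely hard step here, since Theorem \ref{1abfree} already provides the minimal free resolution; the work is purely bookkeeping. The only points requiring care are to assign the sign $(-1)^i$ according to the correct homological degree of each free module and to note that the exponents $\lambda_j,\sigma_j$ are recorded as formal monomials in the numerator. These monomials may coincide and combine for special weights (for instance $\lambda_1=\lambda_2$ when $a=1$, which produces the doubled term $-2t^{qb+1}$ seen in Corollary \ref{corhilbertseries}), but such coincidences do not affect correctness: the $K$-polynomial is the alternating sum of the twists regardless, and the resulting rational function is unchanged.
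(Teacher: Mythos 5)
Your proposal is correct and follows the same route as the paper: the paper proves Proposition \ref{hilbertseries} by applying Theorem \ref{MS} to the graded minimal free resolution of Theorem \ref{1abfree}, reading off the Betti numbers $\beta_{0,0}=\beta_{1,\lambda_j}=\beta_{2,\sigma_j}=1$ to get the $K$-polynomial, and then obtains Corollary \ref{corhilbertseries} as the specialization $a=1$, where $\lambda_1=\lambda_2=qb+1$ produces the term $-2t^{qb+1}$ exactly as you observe. Your explicit remarks that $HS_Y(t)=HS_{S/I}(t)$ and that coinciding twists simply combine in the alternating sum are points the paper leaves implicit, but the argument is the same.
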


\begin{proofC} It follows from Theorem \ref{MS} and the graded free resolution given in Theorem \ref{1abfree} that we have \[\mathcal{K}_M(t)=\sum_{d\in \N W}(-1)^{0}\beta_{0,d}(M)t^d+\sum_{d\in \N W}(-1)^{1}\beta_{1,d}(M)t^d+\sum_{d\in \N W}(-1)^{2}\beta_{2,d}(M)t^d,\]
where, $\beta_{0,0}(F_0)=1$, $\beta_{1,\lambda_1}(F_1)=1$, $\beta_{1,\lambda_2}(F_1)=1$, $\beta_{1,\lambda_3}(F_1)=1$ and $\beta_{2,\sigma_1}(F_2)=1$, $\beta_{2,\sigma_2}(F_2)=1.$ Therefore, we get
\[\mathcal{K}_M(t)=1-t^{qa+1}-t^{qb+1}-t^{(q-1)ab+a+b}+t^{qa+qb+1}+t^{(q-1)ab+a+b+1}.\] So, we have the following rational function as a result of the Theorem \ref{MS},
\begin{eqnarray}
\quad \quad HS_{M}(t)=\frac{1-t^{qa+1}-t^{qb+1}-t^{(q-1)ab+a+b}+t^{qa+qb+1}+t^{(q-1)ab+a+b+1}}{(1-t)(1-t^a)(1-t^b)}.
\end{eqnarray}
which completes the proof.
\end{proofC}

Consider the formula of the Hilbert series of $\P(1,1,b)(\F_q)$ given in Corollary \ref{corhilbertseries}. Multiplying the sum $1-t^{q+1}-2t^{qb+1}+t^{qb+2}+t^{qb+q+1}$ by $1/(1-t)$ twice we get the following polynomial $p(t)$,
\begin{equation*}\label{11bsum1}
1+2t+\cdots+qt^{q-1}+(q+1)t^{q}+\cdots+(q+1)t^{qb}+(q-1)t^{qb+1}+\cdots+2t^{qb+q-2}+t^{qb+q-1}.
\end{equation*}
Multiplying this latter sum by $1/(1-t^b)$ we get the following Hilbert series $HS_{M}(t)$.

\begin{equation}\label{11bhssum}\sum\limits_{i=0}^{\infty}t^{ib}p(t)=\sum\limits_{i=0}^{\infty}t^{ib}\left(\sum\limits_{j=0}^{q}(j+1)t^{j}+\sum\limits_{k=q+1}^{qb}(q+1)t^{k}+\sum\limits_{s=1}^{q-1}(q-s)t^{qb+s}\right).
\end{equation}
Since the Hilbert series is of the form
\begin{equation}\label{eq:HilbSer} \sum\limits_{d\in \N W}H_{Y}(d)t^{d}=\sum\limits_{i=0}^{\infty}t^{ib}p(t),
\end{equation}
the value $H_{Y}(d)$ is the coefficient of $t^{d-ib}$ in the polynomial $p(t)$. So, we denote this coefficient by $C_i(d)$ and carefully implement this idea to get the following result. Notice that the coefficient of $t^j$ in $p(t)$ is
\begin{equation}\label{eq:coefficient}
\begin{cases}
j+1, &\mbox{ if } 0\le j\le q-1\\
q+1, &\mbox{ if } q\le j\le qb\\
qb+q-j, &\mbox{ if } qb+1 \le j\le qb+q-1.
\end{cases}
\end{equation}
\begin{theorem}\label{hilbertfunc1} Let $d=d_0b+r_0$ where $0\le r_0 <b$. If $q\le b$ then the Hilbert function of $Y=\P(1,1,b)(\F_q)$ is as follows
\begin{equation}\label{HS1} H_{Y}(d) =  \begin{cases}
d_0(q+1)+r_0+1, &\mbox{ if } 0\le d_0\le q-1 \mbox{ and } 0\le r_0\le q-1\\
(d_0+1)(q+1), &\mbox{ if } 0\le d_0\le q-1 \mbox{ and } q\le r_0 <b\\
q(q+1), &\mbox{ if } d_0\ge q \mbox{ and } 0<r_0< b\\
q(q+1)+1, &\mbox{ if } d_0\ge q \mbox{ and } r_0=0.
\end{cases}
\end{equation}
\end{theorem}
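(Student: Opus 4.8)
The plan is to read off $H_Y(d)$ directly from the series expansion in \eqref{eq:HilbSer}. By construction, $H_Y(d)$ is the coefficient of $t^d$ in $\sum_{i\ge 0}t^{ib}p(t)$, hence the finite sum $\sum_{i\ge 0}C_i(d)$ of the coefficients $C_i(d)$ of $t^{d-ib}$ in $p(t)$, which are given explicitly by \eqref{eq:coefficient}. Writing $d=d_0b+r_0$ with $0\le r_0<b$ and substituting $k=d_0-i$, the relevant exponent becomes $j=d-ib=kb+r_0$; since $C_i(d)=0$ whenever $j<0$, only the indices $k=0,1,\dots,d_0$ can contribute, and I would carry out the whole argument as a summation over this range.

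The first key step is to use the hypothesis $q\le b$ to decide, for each $k$, which of the three branches of \eqref{eq:coefficient} the exponent $j=kb+r_0$ lands in. Because $q\le b$ forces $q-1<b$, the first branch $0\le j\le q-1$ is reachable only when $k=0$ and $r_0\le q-1$; every index $1\le k\le q-1$ satisfies $q\le b\le kb\le j\le (q-1)b+(b-1)=qb-1$, so it lands in the middle branch with coefficient $q+1$; the boundary index $k=q$ gives $j=qb$ (middle branch, coefficient $q+1$) when $r_0=0$, gives $qb+1\le j\le qb+q-1$ (third branch, coefficient $q-r_0$) when $1\le r_0\le q-1$, and gives $j>qb+q-1$ (zero coefficient) when $r_0\ge q$, the latter occurring only if $q<b$; finally every $k\ge q+1$ yields $j\ge (q+1)b\ge qb+q>qb+q-1$ and hence contributes $0$. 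Verifying these membership inequalities is the technical heart of the argument, and the transition index $k=q$ is where the casework genuinely branches.

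With this classification fixed, I would split on whether $d_0\le q-1$ or $d_0\ge q$. If $d_0\le q-1$ the summation index $k$ never reaches $q$, so only the $k=0$ term and the $d_0$ middle-branch terms $1\le k\le d_0$ survive: the former contributes $r_0+1$ or $q+1$ according as $r_0\le q-1$ or $r_0\ge q$, and adding $d_0(q+1)$ reproduces the first two lines of \eqref{HS1}. If $d_0\ge q$, exactly the indices $k=0,\dots,q$ can contribute; a short summation splitting on $r_0=0$, on $1\le r_0\le q-1$, and on $q\le r_0<b$ gives $1+(q-1)(q+1)+(q+1)=q(q+1)+1$ in the first subcase and $q(q+1)$ in the other two. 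In the middle subcase the cancellation $(r_0+1)+(q-r_0)=q+1$ is what collapses the dependence on $r_0$, and these totals are precisely the last two lines of \eqref{HS1}.

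The main obstacle is purely combinatorial bookkeeping rather than any new algebraic input: one must track the three ranges of \eqref{eq:coefficient} simultaneously against the constraints $0\le r_0<b$ and $q\le b$, taking care at the transition index $k=q$ and at the degenerate value $r_0=0$, where the constant term of $p$ leaves an extra $+1$. Everything else follows from Corollary \ref{corhilbertseries} and the coefficient formula \eqref{eq:coefficient} already established above.
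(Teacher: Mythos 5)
Your proposal is correct and follows essentially the same route as the paper: both extract $H_Y(d)$ as the finite sum of coefficients $C_i(d)$ of $t^{d-ib}$ in $p(t)$ via \eqref{eq:coefficient} and classify which branch each exponent lands in under the hypothesis $q\le b$. The only differences are cosmetic — your reindexing $k=d_0-i$ and your merging of the paper's separate cases $d_0=q$ and $d_0>q$ into the single case $d_0\ge q$, which slightly streamlines the bookkeeping but uses no new idea.
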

\begin{proof} Recall that $H_{Y}(d)$ is the coefficient of $t^d$ in the infinite sum given in \eqref{11bhssum}. So, we concentrate on the coefficient $C_i(d)$ of $t^{d-ib}$ in the polynomial $$p(t)=\sum\limits_{j=0}^{q}(j+1)t^{j}+\sum\limits_{k=q+1}^{qb}(q+1)t^{k}+\sum\limits_{s=1}^{q-1}(q-s)t^{qb+s}=\sum\limits_{i=0}^{d_0}C_i(d)t^{d-ib}$$
for $0\le i \le d_0.$ Thus, $H_{Y}(d)=\sum\limits_{i=0}^{d_0}C_i(d)$. Recall that $d-ib=(d_0-i)b+r_0$.

\textbf{Case I:} Let $0\le d_0\le q-1.$\\
For $i=d_0$, we have $d-ib=r_0$ and get the coefficient of $t^{r_0}$ in $p(t)$ to be
\begin{equation}C_{i}(d)=
\begin{cases}
r_{0}+1, &\mbox{ if } 0\le r_{0}\le q-1\\
q+1, &\mbox{ if } q\le r_0< b.
\end{cases}
\end{equation}
Whenever $0\le i \le d_0-1$, we have 
$q\le b \le (d_0-i)b+r_0 \le (d_0+1)b\le qb.$ 
So, $d-ib=(d_0-i)b+r_0$ is between $q$ and $qb$, and thus $C_{i}(d)=q+1$, for all $0\le i \le d_0-1$, by (\ref{eq:coefficient}).
Hence,
\begin{equation}
H_{Y}(d)=\sum_{i=0}^{d_0}C_i(d)=d_0(q+1)+\begin{cases}
r_{0}+1, &\mbox{ if } 0\le r_{0}\le q-1\\
q+1, &\mbox{ if } q\le r_0< b.
\end{cases}
\end{equation}
\textbf{Case II:} Let $d_0=q$. So, $d=qb+r_0$ and $d-ib=(q-i)b+r_0$. \\
Firstly, let us consider the case where $i=0$.
Then we get the following formula.
\begin{equation}
C_i(d)=\begin{cases}
q+1 &\mbox{ if } r_0=0,\\
q-r_0 &\mbox{ if } 0<r_0\le q-1,\\
0 &\mbox{ if } q\le r_0 < b.
\end{cases}
\end{equation}
Let us now consider the cases where $1\le i\le q-1$. Then,
\[q\le b\le (q-i)b\le(q-1)b \Rightarrow q\le (q-i)b+r_0\le (q-1)b+r_0<qb\Rightarrow C_{i}(d)=q+1.\]
For the case $i=q$, we get the following formula.
\begin{equation}
C_{i}(d)=\begin{cases}
1 &\mbox{ if } r_0=0,\\
r_0+1 &\mbox{ if } 0<r_0\le q-1,\\
q+1 &\mbox{ if } q\le r_0<b.
\end{cases}
\end{equation}
Thus, summing all these, we obtain
\begin{equation}
H_{Y}(d)=\begin{cases}
q(q+1)+1 &\mbox{ if } r_0=0,\\
q(q+1) &\mbox{ if } 0<r_0\le q-1,\\
q(q+1) &\mbox{ if } q\le r_0 < b.
\end{cases}
\end{equation}
\textbf{Case III:} Let $d_0>q$. So, $d_0-q>0$. \\
For the cases $0\le i <d_0-q$, we have $C_i(d)=0,$ since 
\[q+1\le d_0-i \Rightarrow (q+1)b\le (d_0-i)b\Rightarrow qb+(q-1)<  qb+b \le (d_0-i)b\le d-ib.\]
For the cases where $d_0-q\le i <d_0$, we have
\[0<d_0-i\le q \Rightarrow q\le b\le (d_0-i)b\le qb.
\]
So, the interval in which $d-ib=(d_0-i)b+r_0$ lies depends on $r_0$. \\
If $r_0=0$, we get $C_i(d)=q+1$ as $d-ib=(d_0-i)b$, for all $d_0-q\le i \le d_0-1$.\\
If $r_0>0$ and $i=d_0-q$, we get $d-ib=qb+r_0$ and so
\begin{equation}
C_i(d)=\begin{cases}
q-r_0 &\mbox{ if } 1\le r_0\le q-1,\\
0 &\mbox{ if } q\le r_0< b.
\end{cases}
\end{equation}
If $r_0>0$ and $i>d_0-q$, we get $d_0-i<q \Rightarrow d_0-i\le q-1$ and so
\[d-ib=(d_0-i)b+r_0\le (q-1)b+r_0=qb-b+r_0<qb.\]
So, we have $C_i(d)=q+1$, for all $d_0-(q-1)\le i \le d_0-1$.\\
Finally, for the case $i=d_0$, we have $d-ib=r_0$ and so
\begin{equation}
C_i(d)=\begin{cases}
r_0+1 &\mbox{ if } 0\le r_0\le q-1,\\
q+1 &\mbox{ if } q\le r_0<b.
\end{cases}
\end{equation}
Summarizing, we have $C_i(d)=q+1$, for all $d_0-(q-1)\le i \le d_0-1$ and that
\begin{equation}
C_{d_0-q}(d)+C_{d_0}(d)=\begin{cases}
(q+1)+1 &\mbox{ if }  r_0=0,\\
q+1 &\mbox{ if } 0 < r_0 < b.
\end{cases}
\end{equation}
Therefore, we obtain
\begin{equation}
H_{Y}(d)=\begin{cases}
q(q+1)+1 &\mbox{ if } r_0=0,\\
q(q+1) &\mbox{ if } 0<r_0 < b,
\end{cases}
\end{equation} 
completing the proof.
\end{proof}
\begin{theorem}\label{hilbertfunc2} Let $d=d_0b+r_0$ with $0\le r_0 <b$. When $d>q$, we also let $d=q+kb+r$ with $0\leq r <b$. If $q>b$ then the Hilbert function of $Y=\P(1,1,b)(\F_q)$ is as follows 
\begin{equation*}\label{HS2} H_{Y}(d) = \begin{cases}
d+1 &\mbox{ if } 0\le d\le b-1,\\
(d_0+1)(d+1-bd_0/{2})
&\mbox{ if } b\le d\le q,\\
(d_0-k)(d-(d_0+k+1)b/2)+q(k+1)+1 &\mbox{ if } q+1\le d\le qb\\
(k+q)(k-q+1)b/2+q(k+d+1)-dk-bd_0+\kappa &\mbox{ if } d>qb \mbox{ and } k<q,\\
q^2+r_0 +\kappa &\mbox{ if } d>qb \mbox{ and } k\ge q,

\end{cases}
\end{equation*}
where 
\begin{equation*} \kappa = \begin{cases}
q-r_0, &\mbox{ if } 0<r_0<q,\\
q+1 &\mbox{ if } r_0=0.
\end{cases}
\end{equation*}   
\end{theorem}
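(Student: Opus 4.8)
The plan is to follow the same principle as in the proof of Theorem \ref{hilbertfunc1}: by \eqref{11bhssum} and \eqref{eq:HilbSer}, the value $H_Y(d)$ is the coefficient of $t^d$ in $\sum_{i\ge 0}t^{ib}p(t)$, so writing $d=d_0b+r_0$ we have
\[H_Y(d)=\sum_{i=0}^{d_0}C_i(d),\]
where $C_i(d)$ is the coefficient of $t^{d-ib}$ in $p(t)$ and $d-ib=(d_0-i)b+r_0$. Thus the task reduces to summing the piecewise-linear coefficient function \eqref{eq:coefficient} along the arithmetic progression of exponents $r_0,\,r_0+b,\,\dots,\,d$ of common difference $b$. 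The only data needed are the three regimes of \eqref{eq:coefficient}: the rising ramp on $[0,q-1]$, the flat plateau on $[q,qb]$, and the falling ramp on $[qb+1,qb+q-1]$.

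First I would isolate the structural difference from the case $q\le b$. There the step $b$ is at least the width $q$ of each ramp, so at most one progression term lands on a ramp while the interior terms sit on the plateau, which is why Theorem \ref{hilbertfunc1} gave only linear expressions. When $q>b$ the step is strictly smaller than the ramp width, so several consecutive exponents fall on the same ramp, and summing the linear coefficients there produces genuine arithmetic series through $\sum_{i=a}^{c}i=\tfrac{(c-a+1)(a+c)}{2}$. This is precisely the source of the quadratic terms $bd_0/2$, $(d_0+k+1)b/2$ and $(k+q)(k-q+1)b/2$ in the statement.

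The proof then splits into the five stated cases, and in each I would partition the index set $\{0,1,\dots,d_0\}$ by which region of \eqref{eq:coefficient} contains $d-ib$ and evaluate each block in closed form. For $0\le d\le b-1$ one has $d_0=0$ and a single term, giving $d+1$ since $d<q$. For $b\le d\le q$ every exponent lies in $[0,q]$, so $C_i(d)=d-ib+1$ throughout and one arithmetic series yields $(d_0+1)(d+1-bd_0/2)$. For $q+1\le d\le qb$ no exponent reaches the falling ramp; using $d=q+kb+r$ to locate the crossing, the indices $i\le k$ give the plateau value $q+1$ and the indices $i>k$ lie on the rising ramp, so $H_Y(d)$ is the sum of the plateau block $(k+1)(q+1)$ and an arithmetic series over the remaining $d_0-k$ terms. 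Finally, for $d>qb$ the largest exponents enter the falling ramp and contribute a block whose value depends on $r_0$; this dependence is exactly what the auxiliary quantity $\kappa$ records. Here one splits further according to whether $k<q$, when the top exponent $d$ itself still lies on the falling ramp and that block is a truncated arithmetic series producing a $d$-dependent formula, or $k\ge q$, when the progression already exhausts the whole support of $p(t)$ inside the residue class $r_0\bmod b$ and the sum saturates to the constant $q^2+r_0+\kappa$.

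The main obstacle is the bookkeeping for $d>qb$ with $k<q$. There the index set $\{0,\dots,d_0\}$ must be cut into three consecutive blocks — the $i$ for which $d-ib$ sits on the falling ramp, on the plateau, and on the rising ramp — and the two crossing indices depend on $k$, $r$ and $r_0$ through the relation $(d_0-k)b=q+r-r_0$ forced by the two representations of $d$. Each block contributes its own arithmetic series, and the falling-ramp endpoints behave differently according to whether $r_0=0$ or $0<r_0<b$, which is the dichotomy encoded in $\kappa$; getting these endpoint alignments and block lengths exactly right is the delicate point. To guard against off-by-one and constant errors, I would check that the piecewise formula agrees across the case boundaries $d=q$, $d=qb$ and $d=qb+q$, and cross-check a small instance such as $(q,b)=(3,2)$ against a direct expansion of $p(t)/(1-t^b)$ before fixing the constants in each closed form.
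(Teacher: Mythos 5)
Your proposal follows exactly the paper's own method: extracting $H_Y(d)=\sum_{i=0}^{d_0}C_i(d)$ from $\sum_i t^{ib}p(t)$, partitioning the index set $\{0,\dots,d_0\}$ into blocks according to the rising ramp, plateau, and falling ramp of \eqref{eq:coefficient}, summing each block as an arithmetic series, and isolating the boundary index (the paper's $i=d_0-q$) whose $r_0$-dependence produces $\kappa$, with the same final split on $k<q$ versus $k\ge q$. This is essentially the same proof as in the paper, and your proposed sanity checks at the case boundaries are a sensible addition (indeed, such a check exposes that the displayed closed form in the third case of the theorem differs from what the block decomposition actually yields, e.g.\ at $q=5$, $b=2$, $d=6$).
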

\begin{proof}
Let $d=d_0b+r_0$ for $0\le r_0< b<q.$ Recall that the Hilbert series in (\ref{eq:HilbSer}) is
\begin{equation}
\sum\limits_{i=0}^{\infty}t^{ib}\left(\sum\limits_{j=0}^{q}(j+1)t^{j}+\sum\limits_{k=q+1}^{qb}(q+1)t^{k}+\sum\limits_{s=1}^{q-1}(q-s)t^{qb+s}\right)=\sum\limits_{i=0}^{\infty}t^{ib}p(t)    
\end{equation} and that $C_i(d)$ is the coefficient of $t^{d-ib}$ in the polynomial $p(t)$ for $0\le i \le d_0$ as in the previous proof.
Putting $j=d-ib$ in (\ref{eq:coefficient}), we get
\begin{equation}\label{coeffpol}
C_i(d)=\begin{cases}
d-ib+1 &\mbox{ if } 0\le d-ib\le q,\\
q+1 &\mbox{ if } q+1 \le d-ib \le qb,\\
q+(qb-d+ib) &\mbox{ if } qb+1\le d-ib \le qb+q-1.
\end{cases}
\end{equation}

\textbf{Case I:} Let $0\le d\le b-1.$ So, we have $d_0=0$. Then $i=0$. Hence, $H_{Y}(d)=d+1.$

\textbf{Case II:} Let $b\le d\le q.$ It is clear that $d_0>0.$ And then $d-ib\le q.$ So, $C_i=d-ib+1.$ Hence, \[H_{Y}(d)=\sum\limits_{i=0}^{d_0}(d-ib+1)=(d_0+1)(d+1)-b\sum\limits_{i=0}^{d_0}i=(d_0+1)(d+1)-b\left(\frac{d_0(d_0+1)}{2}\right).\]

\textbf{Case III:} Let $q+1\le d< qb$ and also $d=q+kb+r$ with $0\leq r <b$. Since $d>q$ we have $d-q=kb+r$. Firstly, let $k=0.$ If $i=0$ we get $C_i=q+1.$
Let $i>0$. So, $d-d_0b<\dots<d-2b<d-b<q.$ And then we get $C_i=d-ib+1$ for $i=1,2,\dots,d_0.$ Hence,
$H_{Y}(d)=q+1+\sum\limits_{i=1}^{d_0}(d-ib+1).$

If $i=0$ we get $C_i=q+1.$ Let $1\le i \le k.$ It is clear that $q\le d-b < (q-1)a$ so we have $q\le d-kb\le d-ib \le d\le qb$. Then, we get $C_i=q+1$ for $1\le i\le k.$ Let $k+1\le i\le d_0.$ Since $-i\le -k-1$ we have $d-ib\le d-kb-b$. Also since $d=q+kb+r$ and $r<b$ we get $d-ib\le d-kb-b\le q+r-b <q.$ Hence,
$H_{Y}(d)=(k+1)(q+1)+\sum\limits_{i=k+1}^{d_0}(d-ib+1).$
If we arrange the equality we get,
\[H_{Y}(d)=(k+1)(q+1)+(d_0-k)(d+1)-b\left(\frac{d_0(d_0+1)}{2}-\frac{k(k+1)}{2}\right).\]

\textbf{Case IV:} Let $qb<d$ and $k< q.$  Since $d=d_0b+r_0$ we have $d_0\ge q.$ 

Let $0\le i < d_0-q.$ If we consider $i< d_0-q,$ we get $d-ib>qb.$ So, we get, \[C_i=q+qb-d+ib.\]

Let $d_0-q<i\le d_0-k.$ If we consider $d_0-q< i$, using the argument in Case V, we get $d-ib<qb.$ And also if we consider $i\le d_0-k,$ we have $d-ib\ge d-d_0b+kb=r_0+kb>kb=d-q-r>qb-q-r>qb-2q=q(b-2).$ Since $b\ge 2$ we have $d-ib>q(b-2)\ge q.$ Then we get, \[C_i=q+1.\]

Let $d_0-k <i \le k.$ If we consider $d_0-k<i$ we have $d-ib\le d-d_0b+kb-b=r_0-b+kb<qb+r_0-b.$ Since $r_0<b$ we get $d-ib<qb.$ And if we consider $i\le k$ we have $d-ib\ge d-kb=q+r>q.$ So we get, \[C_i=q+1.\]

Let $k+1\le i \le d_0.$ Using the argument in Case V, we get, \[C_i=d-ib+1.\]
Let consider $i=d_0-q$. If $i=d_0-q$ we have $d-ib=d-d_0b+qb=r_0+qb.$ Since $0<r_0<q$ we have $qb< d-ib\le qb+q-1$. So, we get \[C_i=q-r_0 \mbox{ for } 0<r_0<q.\] Also, if we consider $r_0=0$ we get $d-ib=d-d_0b+qb=r_0+qb=qb$. So, \[C_i=q+1 \mbox{ for } r_0=0.\] 
Therefore we get,

\[H_{Y}(d)=\sum\limits_{i=0}^{d_0-q-1}(q+qb-d+ib)+(q+k-d_0)(q+1)+\sum\limits_{i=k+1}^{d_0}(d+1-ib)+\kappa\] where \begin{equation*} \kappa = \begin{cases}
q-r_0, &\mbox{ if } 0<r_0<q,\\
q+1 &\mbox{ if } r_0=0.
\end{cases}
\end{equation*}
If we arrange all equalities for $d > qb$ and $k < q$ we get,
\begin{align*}
H_{Y}(d)&= (d_0-q)(q+qb-d)+\frac{b}{2}\left((d_0-q-1)(d_0-q)\right)+(q+k-d_0)(q+1)\\
&+(d+1)(d_0-k)-\frac{b}{2}\left(d_0(d_0+1)-k(k+1)\right)+\kappa.\\
&=qd-\frac{q^2b}{2}-bd_0+\frac{bq}{2}+q+kq-dk+\frac{bk^2}{2}+\frac{bk}{2}+\kappa\\
&=\frac{b}{2}\left((k+q)(k-q+1)\right)+q(d+1+k)-bd_0-kd+\kappa  \begin{cases}
q-r_0, &\mbox{ if } r_0<q,\\
q+1 &\mbox{ if } r_0=0.
\end{cases}
\end{align*}

\textbf{Case V:} Let $qb<d$ and $k\ge q.$ Recall that $d=d_0b+r_0$ and we have $d_0\ge q.$ 

Let $0\le i \le k-q.$ Then, we get $d-ib\ge d-kb+qb =q+r+qb\ge qb+q>qb+q-1$. Since $d-ib>qb+q-1$ we get \[C_i=0.\]

Let $k-q <i <d_0-q.$ Firstly if we consider $i<d_0-q$ we get $-i>q-d_0\Rightarrow d-ib>d+qb-d_0b=r_0+qb>qb.$
Also, if we consider $i>k-q$ we get $-i\le -k+q-1\Rightarrow d-ib \le d-kb+qb-b.$ Since $d=q+kb+r$ and $r<b$ we get $d-ib \le d-kb+qb-b \le q+r+qb-b\le qb+q-1.$ So, we get $qb<d-ib\le qb+q-1$ and then we get \[C_i=q+qb-d+ib.\]

Let $d_0-q<i\le k.$ Then since $d=kb+q+r$ we have $d-ia\ge d-kb =q+r\ge q.$ Also, we have $i\ge d_0-q+1\Rightarrow -i\le q-1-d_0\Rightarrow d-ib\le d+qb-b-d_0b=qb+r_0-b.$ Since $r_0<b$ we get $d-ib<qb.$ So,
\[C_i=q+1.\]

Let $k+1\le i\le d_0.$ Using the argument in Case IV, we get \[C_i=d-ib+1.\] 
Let $i=d_0-q$ as in \text{Case IV}. Then, we have $d-ib=d-d_0b+qb=r_0+qb.$ Since $0<r_0<q$ we have $qb< d-ib\le qb+q-1$. So, we get \[C_i=q-r_0 \mbox{ for } 0<r_0<q.\] Also, if we consider $r_0=0$ we get $d-ib=d-d_0b+qb=r_0+qb=qb$. So, \[C_i=q+1 \mbox{ for } r_0=0.\] 

Therefore if we arrange all equalities for $d>qb$ and $k\ge q$ we get the desired statement.
\end{proof}

Based on Theorem \ref{hilbertfunc1} and Theorem \ref{hilbertfunc2}, we describe the regularity set of the $\F_q$-rational points $Y=\P(1,1,b)(\F_q)$ in the next consequence. 

\begin{corollary}\label{cor:reg}
Let $b$ be a positive integer. The regularity set of $Y=\P(1,1,b)(\F_q)$ is given as follows. 
\[\reg(Y)=\{d_0b \in \N:  d_0\ge (q+\lfloor (q-1)/b\rfloor)\}=(q+\lfloor (q-1)/b\rfloor)b+\N b.\]
\end{corollary}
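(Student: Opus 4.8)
The plan is to match the defining condition $H_Y(d)=|Y|$ against the explicit piecewise formulas for the Hilbert function furnished by Theorems \ref{hilbertfunc1} and \ref{hilbertfunc2}. First I would record that for $X=\P(1,1,b)$ there are three weights, so $r=3$ and the length formula from the introduction gives $|Y|=q^2+q+1$; hence $\reg(Y)=\{d\in\N: H_Y(d)=q^2+q+1\}$, where I also note $\N W=\N$ since $1$ is a weight. The argument then splits according to the two regimes $q\le b$ and $q>b$ handled by the two Hilbert function theorems.

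In the regime $q\le b$, Theorem \ref{hilbertfunc1} does almost all the work. Among its four branches the value $q^2+q+1$ is attained only in the last one, namely $d_0\ge q$ together with $r_0=0$; the other three branches give $d_0(q+1)+r_0+1$, $(d_0+1)(q+1)$, and $q(q+1)$, each of which I would check is at most $q^2+q<q^2+q+1$. Writing $d=d_0b$ and observing that $q\le b$ forces $\lfloor(q-1)/b\rfloor=0$, this branch is exactly $d_0\ge q+\lfloor(q-1)/b\rfloor$, as claimed.

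For the regime $q>b$ I would first localize the search: the smallest candidate exponent $d_0=q+\lfloor(q-1)/b\rfloor$ has $\lfloor(q-1)/b\rfloor\ge 1$, so $d=d_0b>qb$, meaning every element of $\reg(Y)$ lies in the range $d>qb$ governed by the last two branches of Theorem \ref{hilbertfunc2}; I would still confirm that Cases I--III (all with $d\le qb$) stay strictly below $q^2+q+1$. In the range $d>qb$ the necessity of $r_0=0$ is immediate, since when $0<r_0<b$ (so $0<r_0<q$ and $\kappa=q-r_0$) the branch $k\ge q$ evaluates to exactly $q^2+q=|Y|-1$ and the branch $k<q$ stays below $|Y|$. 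Assuming $r_0=0$ (so $\kappa=q+1$), the branch $k\ge q$ gives $q^2+r_0+\kappa=q^2+q+1=|Y|$ outright, while the decisive computation is the boundary $k=q-1$ of the branch $k<q$: there the factor $(k+q)(k-q+1)$ vanishes and, substituting $d=d_0b$, the formula collapses to $q^2+q+1$ as well; for $k\le q-2$ a strictly negative contribution survives and keeps the value below $|Y|$.

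Finally I would translate the parameter condition. Since $r_0=0$ forces $k=d_0-\lceil q/b\rceil$, the threshold $k\ge q-1$ reads $d_0\ge q-1+\lceil q/b\rceil$, and the elementary identity $\lceil q/b\rceil=\lfloor(q-1)/b\rfloor+1$ (valid whether or not $b\mid q$) rewrites this as $d_0\ge q+\lfloor(q-1)/b\rfloor$. Combining both regimes yields $\reg(Y)=\{d_0b: d_0\ge q+\lfloor(q-1)/b\rfloor\}=(q+\lfloor(q-1)/b\rfloor)b+\N b$. I expect the \textbf{main obstacle} to be the sub-threshold direction when $q>b$: showing $H_Y(d)<|Y|$ at $d_0=q+\lfloor(q-1)/b\rfloor-1$ (that is, $k=q-2$) is delicate because this $d$ lands in Case III when $b<q\le 2b$ (where $d=qb$) but in Case IV when $q>2b$, so the strict inequality must be extracted from two different formulas depending on how $q$ compares to $b$.
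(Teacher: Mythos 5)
Your proposal is correct and follows essentially the same route as the paper: both match the branch values of Theorems \ref{hilbertfunc1} and \ref{hilbertfunc2} against $|Y|=q^2+q+1$, check that all branches except $r_0=0$ with $d_0\ge q$ (resp.\ $k\ge q-1$) fall strictly short, and then convert that branch into the stated threshold on $d_0$. The only difference is cosmetic and occurs at the very end: you translate $k\ge q-1$ into $d_0\ge q+\lfloor(q-1)/b\rfloor$ via $k=d_0-\lceil q/b\rceil$ and the identity $\lceil q/b\rceil=\lfloor(q-1)/b\rfloor+1$, whereas the paper locates the unique integer $d_0$ in the interval $q+(q-b)/b\le d_0\le q+(q-1)/b$ --- the same computation in different clothing.
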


\begin{proof}
$d=d_0b+r_0$ with $0\le r_0 <b$ as before.

\textbf{Case A:}
Firstly let $q\le b.$ For this case, it is sufficient to show the following equality since $\lfloor (q-1)/b\rfloor=0$.
\[\reg(Y)=\{d \in \N: d=d_0b \mbox{ with } d_0\ge q\}=qb +\N b.\]
\textbf{Case A.I:}  Let $0\le d_0\le q-1$ and $0\le r_0\le q-1.$ If we consider the result given in \ref{hilbertfunc1} we get the value of Hilbert function as $H_{Y}(d)=d_0(q+1)+r_0+1.$ So,
\[H_{Y}(d)\le (q-1)(q+1)+q=q^2-1+q<q^2+q+1.\] Therefore we have $d\not\in \reg(Y)$ for $0\le d_0\le q-1$ and $0\le r_0\le q-1$.

\textbf{Case A.II:} Let $0\le d_0$ and $q\le r_0<b.$ Similarly, if we consider the result given in \ref{hilbertfunc1} we get $H_{Y}(d)=(d_0+1)(q+1)\le q(q+1).$ Then it is clear that $H_{Y}(d)<q^2+q+1.$ So, we get $d\notin \reg(Y)$.

\textbf{Case A.III:} Let $d_0\ge q$ and $0<r_0<b$. We get $H_{Y}(d)=q(q+1)$ from the result given in \ref{hilbertfunc1}. It is clear that $H_{Y}(d)<q^2+q+1.$ Therefore, we get $d\notin \reg(Y)$ for $d_0\ge q$ and $0<r_0<b$.

\textbf{Case A.IV:} Let $d_0\ge q$ and $r_0=0$. Since the result given in \ref{hilbertfunc1} we know that $H_{Y}(d)=q^2+q+1$ for $d_0\ge q$ and $r_0=0$. So, $d=d_0b\in \reg(Y)$ in this case. 

Since, only one of the situations above is possible, it follows that we have
\[\reg(Y)=\{d\in \N: d=d_0b \mbox{ with } d_0\ge q\}=qb+\N b.\]

\textbf{Case B:} Let $q>b$. 

\textbf{Case B.I:} Let $0\le d\le b-1.$ We get $H_{Y}(d)=d+1$ from the result given in \ref{hilbertfunc2}. It is clear that $H_{Y}(d)<q^2+q+1.$ Therefore, we get $d\notin \reg(Y)$ for $0\le d\le b-1$.

\textbf{Case B.II:} Let $b\le d\le q$. We have $d_0\ge 1$, as otherwise we get the contradiction $ d=d_0b+r_0=r_0 <b$. Then, we get the following inequality,
\[H_{Y}(d)=(d_0+1)(d+1)-b\left(\frac{d_0(d_0+1)}{2}\right)< (d_0+1)(d+1)\le q(q+1).   \] Therefore we get $H_{Y}(d)<q^2+q+1.$ So, we get $d\notin \reg(Y)$ for $b\le d\le q$.

\textbf{Case B.III:} Let $q+1\le d\le qb$. Consider the following sum:
\[H_{Y}(d)=(k+1)(q+1)+(d_0-k)(d+1)-b\sum\limits_{i=k+1}^{d_0}i.\] If we arrange this sum we get,
\begin{align*}
&=kq+q+1+d_0d+d_0-kd-\frac{bd_0^2}{2}-\frac{bd_0}{2}+\frac{bk^2}{2}+\frac{bk}{2}\\
&=kq+q+1+bd_0^{2}+d_{0}r_{0}+d_0-kd-\frac{bd_0^{2}}{2}-\frac{bd_0}{2}+\frac{k^2b}{2}+\frac{kb}{2}\\
&<kq+q+1+\frac{bd_0^2}{2}+\frac{bd_0}{2}+d_0-kd+\frac{k^2b}{2}+\frac{kb}{2}\\
&=kq+q+1+\frac{bd_0^2}{2}+\frac{bd_0}{2}+d_0-kq-k^2b-kr+\frac{k^2b}{2}+\frac{kb}{2}\\
&=q+1+\frac{bd_0^{2}}{2}+\frac{bd_0}{2}+d_0-\frac{k^2b}{2}-kr+\frac{kb}{2}.
\end{align*}

Since $k>1\Rightarrow -k<-1$, $d=q+kb+r\le qb \Rightarrow k\le q-\frac{q+r}{b},$  we get the following inequalities.


\begin{align*}
&<q+1+\frac{bd_0^{2}}{2}+\frac{bd_0}{2}+d_0+\frac{b}{2}-r+\frac{kb}{2}\\
&=q+1+d_0+\frac{b}{2}\left(d_0(d_0+1)\right)+\frac{b}{2}\left(k+1\right)-r\\
&\le q+1+d_0+\frac{bd_0}{2}\left(d_0+1\right)+\frac{b}{2}\left(q-\frac{q+r}{b}+1\right)-r\\
&=q+1+d_0+\frac{bd_0}{2}\left(d_0+1\right)+\frac{qb}{2}-\frac{qb}{2b}-\frac{rb}{2b}+\frac{b}{2}-r\\
&=\frac{q}{2}+1+d_0+\frac{bd_0}{2}\left(d_0+1\right)+\frac{qb}{2}+\frac{b}{2}-\frac{3r}{2}.
\end{align*}
As we know that $q+1\le d=d_0b+r_0$, we have $q\le d_0b+r_0-1.$ And also, \[d_0\le q\le d_0b+r_0-1\Rightarrow d_0\le d_0b+r_0-1\Rightarrow d_0\le \frac{r_0-1}{1-b}.\] If we use all these inequalities we get,
\begin{align*}
&\le \frac{bd_0+r_0}{2}+\frac{1}{2}+\frac{r_0-1}{1-b}+\frac{bd_0}{2}\left(\frac{r_0-1+1-b}{1-b}\right)+\frac{qb}{2}+\frac{b}{2}-\frac{3r}{2}\\
&< \frac{bd_0}{2}+b+\frac{1}{2}+\frac{b-1}{1-b}+\frac{qb}{2}-\frac{3r}{2}\\
&=\frac{bd_0}{2}+b-\frac{1}{2}+\frac{qb}{2}-\frac{3r}{2}\\
&\le qb+b-\frac{(3r+1)}{2}\\
&<q^2+q<q^2+q+1.
\end{align*}
Therefore we get $H_{Y}(d)<q^2+q+1.$ So, $d\notin \reg(Y)$ for $q+1\le d\le qb$.

\textbf{Case B.IV:} Let $d>qb$ and $k<q.$
Consider the following sum:
\[H_{Y}(d)=\frac{b}{2}\left((k+q)(k-q+1)\right)+q(k+d+1)-dk-bd_0+\kappa.\]
The formula above for $H_{Y}(d)$ depends on the value of $k$ in the formula $d=q+kb+r$. When $k=q-1$, we get
\begin{align*}
H_{Y}(d)&=\frac{b}{2}\left((q-1+q)(q-1-q+1)\right)+q(q-1+d+1)-qd+d-bd_0+\kappa\\
&= q^2+d-bd_0+\kappa=q^2+r_0+\kappa.
\end{align*}
\begin{itemize}
    \item If $0<r_0<q$, then $\kappa=q-r_0$ and so $H_{Y}(d)=q^2+q<q^2+q+1$.
Thus, we get $d\notin \reg(Y)$ for $d>qb,$ $k=q-1$ and $0<r_0<q.$
\item Let $r_0=0.$ Then, $\kappa=q+1$ and so $H_{Y}(d)=q^2+q+1$. Therefore, $d\in \reg(Y)$ for  $d>qb,$ $k=q-1$ and $r_0=0$. 
\end{itemize}
Since, $H_{Y}(q+kb+r) < H_{Y}(q+(q-1)b+r)$ for $k<q-1$,it follows that $d\notin \reg(Y)$ if $k<q-1$. 

\textbf{Case B.V:} Let $d>qb$ and $k\ge q.$
\begin{itemize}
\item Let $0<r_0<q.$ So, $\kappa=q-r_0.$ Then, we have 
\[H_{Y}(d)=q^2+d-bd_0+\kappa=q^2+q<q^2+q+1.\] 
So, we get $d\notin \reg(Y)$ for $d>qb, k\ge q$ and $0<r_0<q.$
\item Let $r_0=0.$ So, $\kappa=q+1.$ Then, we have
\[H_{Y}(d)=q^2+d-bd_0+\kappa=q^2+q+1.\] 
So, $d\in \reg(Y)$ in this case. 
\end{itemize}

All these cases reveal that $d\in \reg(Y)$ if and only if  $d=d_0b$, $d_0>q$ and $k\ge q-1$. In order to determine the smallest element in $\reg(Y)$, we take $k=q-1$ and look for the value of $r$ for which $d=d_0b\in \reg(Y)$. Since $0\le r \le b-1$, we have 
\[
q+(q-1)b\le q+kb+r=d \le q+qb-b+b-1=qb+q-1.
\]
Thus, we get $qb+q-b \le d=d_0b \le qb+q-1$. As there are $b$ integers in this interval, the unique integer in the following interval 
\[
q+(q-b)/b\le d_0\le q+(q-1)/b
\]
is $d_0=q+\lfloor q-1/b\rfloor$. Therefore,
\begin{align*}
\reg(Y)&=\{d\in \N: d=d_0b=q+kb+r \mbox{ with } d_0\le q+(q-1)/b, d_0> q \mbox{ and } k\ge q-1\}\\
&=(q+\lfloor q-1/b\rfloor)b+\N b    
\end{align*}
\end{proof}


\begin{theorem}\label{serre}\cite[Theorem 4.3.5]{BJ1993} Let $R$ be a positively graded $\K$-algebra and $M\neq 0$ a finite(a.k.a. finitely generated) graded $R$-module. Then,
\begin{enumerate}[label=\roman*.]
    \item there exist a uniquely determined quasi-polynomial $P_{M}$ with $H_{M}(d)=P_{M}(d)$ for all $d\gg 0.$
    \item one has $\deg{(HS_{M}(t))}=\max\{d:H_{M}(d)\neq P_{M}(d)\}$
\end{enumerate}
Here, $\deg(H_{M}(t))$ denotes the degree of the rational function $HS_{M}(t)$ and is also known as the $a$-invariant of $M$ denoted $a(M)$.
\end{theorem}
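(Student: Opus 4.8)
The plan is to reduce this quoted statement to a purely power-series analysis of the rational function $HS_M(t)$ and then read off both assertions from a partial fraction decomposition. Since $R$ is a positively graded $\K$-algebra, it is a quotient of a weighted polynomial ring $\K[x_1,\dots,x_r]$ with $\deg x_i = w_i>0$, so $M$ is a finitely generated graded module over such a ring. By the Hilbert--Serre form of the Hilbert series (the same mechanism invoked in Theorem \ref{MS}) we may write
\begin{equation*}
HS_M(t)=\frac{Q(t)}{\prod_{i=1}^{r}(1-t^{w_i})}, \qquad Q(t)\in \Z[t].
\end{equation*}
Everything then follows from understanding the coefficients $H_M(d)$ in the power-series expansion of the right-hand side, and it is precisely the weighted (non-standard) grading that will force a \emph{quasi}-polynomial rather than an honest polynomial.

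For part (i), I would decompose $HS_M(t)$ into partial fractions over $\overline{\K}$ (or $\mathbb{C}$). Every pole lies at a root of unity $\alpha$, since each $1-t^{w_i}$ splits into linear factors $1-\alpha t$ with $\alpha^{w_i}=1$; hence the proper part of the decomposition is a sum of terms $c_{\alpha,k}/(1-\alpha t)^k$. Expanding each factor via $(1-\alpha t)^{-k}=\sum_{d\ge 0}\binom{d+k-1}{k-1}\alpha^{d}t^{d}$ shows that, away from the finitely many low-degree terms coming from any polynomial part, the coefficient of $t^d$ equals $\sum_{\alpha}g_{\alpha}(d)\,\alpha^{d}$ with each $g_{\alpha}$ a polynomial in $d$. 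Because each $\alpha$ is a root of unity, $\alpha^{d}$ is periodic in $d$, so this expression is a quasi-polynomial $P_M(d)$ agreeing with $H_M(d)$ for all $d\gg 0$. Uniqueness is immediate: two quasi-polynomials agreeing for all large $d$ restrict, on each residue class modulo a common period, to genuine polynomials agreeing at infinitely many arguments, hence coincide.

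For part (ii), I would first record the elementary identity $\deg\bigl(HS_M(t)\bigr)=\deg Q-\sum_{i=1}^{r}w_i$, a difference that is unaffected by cancelling common factors, and then match it with the largest index at which $H_M$ and $P_M$ disagree. When $HS_M(t)$ is not proper, polynomial division yields $HS_M(t)=p(t)+r(t)/\prod(1-t^{w_i})$ with $\deg r<\sum w_i$; here $\deg p=\deg Q-\sum w_i$, the proper remainder already reproduces $P_M(d)$ for every $d\ge 0$, and the last disagreement therefore occurs exactly at $\deg p=\deg\bigl(HS_M(t)\bigr)$. The delicate situation is the proper case, where $\deg\bigl(HS_M(t)\bigr)<0$ and the disagreement index is negative, so it must be pinned down with care.

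The main obstacle is exactly this proper, negative-degree case. The extended values $P_M(d)$ for $d<0$ do not automatically vanish — the example $1/(1-t)^2$, whose first disagreement is at $d=-2$ rather than $d=-1$, already shows this — so locating the largest $d$ with $H_M(d)\neq P_M(d)$ requires a reciprocity statement: the Laurent expansion of $HS_M(t)$ at $t=\infty$ encodes $-P_M(d)$ for $d<0$, and its lowest surviving exponent is precisely $\deg\bigl(HS_M(t)\bigr)$. I would prove this term-by-term on the partial fractions, comparing the two expansions of $1/(1-\alpha t)^k$ at $0$ and at $\infty$; this simultaneously confirms $H_M(d)=P_M(d)$ for all $d>\deg\bigl(HS_M(t)\bigr)$ and that equality fails at $d=\deg\bigl(HS_M(t)\bigr)$, identifying this common value with the $a$-invariant $a(M)$.
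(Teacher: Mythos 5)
The paper does not prove this statement at all: Theorem \ref{serre} is quoted verbatim from Bruns--Herzog \cite[Theorem 4.3.5]{BJ1993} and used as a black box, its only role being to justify the Hilbert quasi-polynomial in the corollary that follows. So there is no internal proof to compare against, and your write-up is a self-contained substitute for the citation. It is correct, and it is essentially the classical generating-function argument. For (i), rationality of $HS_M(t)$ with denominator $\prod_i(1-t^{w_i})$, partial fractions over roots of unity, and the expansion $(1-\alpha t)^{-k}=\sum_{d\ge 0}\binom{d+k-1}{k-1}\alpha^d t^d$ give $H_M(d)=\sum_\alpha g_\alpha(d)\alpha^d$ for all $d$ beyond the polynomial part, which is a quasi-polynomial of period dividing $\lcm(w_1,\dots,w_r)$; your uniqueness argument (restrict to residue classes, polynomials agreeing at infinitely many integers coincide) is the standard one. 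For (ii) you have isolated exactly the delicate point: a proper rational function's coefficients agree with its quasi-polynomial at \emph{every} $d\ge 0$, so in the proper case the last disagreement sits at a negative index, and your reciprocity mechanism does pin it down --- a term-by-term computation shows that the Laurent expansion of $1/(1-\alpha t)^k$ at $t=\infty$ has coefficient $-P(d)$ at $t^d$ for all $d<0$, whence $H_M(d)-P_M(d)$ is precisely the coefficient of $t^d$ in the expansion of $HS_M(t)$ at infinity, whose top exponent is $\deg\bigl(HS_M(t)\bigr)$; this also re-proves the improper case, where the disagreement is the polynomial quotient $p(t)$ and $\deg p=\deg\bigl(HS_M(t)\bigr)$.

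Two small repairs are needed, neither fatal. First, do not decompose over $\overline{\K}$: in this paper $\K$ is the algebraic closure of $\F_q$, so it has characteristic $p$, where $1-t^{w_i}$ can be an inseparable power and the binomial expansions above are meaningless; since $HS_M(t)$ has integer coefficients, it must be regarded as a rational function over $\Q$ and decomposed over $\mathbb{C}$ (or over a cyclotomic extension of $\Q$), exactly as your parenthetical alternative allows. Second, a finitely generated graded module over a positively graded algebra may have finitely many nonzero components in negative degrees, so your numerator $Q(t)$, and the quotient $p(t)$ in the division, should be permitted to be Laurent polynomials; the identity ``$H_M-P_M$ equals the coefficient sequence of the expansion at infinity'' then treats the proper and improper cases uniformly, with no change to the conclusion.
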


\begin{remark} It follows from Theorem \ref{serre} that the Hilbert function of $Y$ agrees with the Hilbert \textbf{quasi-polynomial} $P_Y$, for all $d>a(Y)$, i.e. there exist a positive integer $g$ (period) and the polynomials $P_{0},\dots,P_{g-1}$ such that $H_{Y}(d)=P_{i}(d)$ for $d>a(Y)$ and $d\equiv i\mod g.$
\end{remark}

As a direct consequence of the above results, we give now the Hilbert quasi-polynomial of $Y$.
\begin{corollary} Let $Y=\P(1,1,b)( \F_q)$. Then the Hilbert quasi-polynomial of $Y$ is given by 
\begin{align*}\label{reg:unit} 
P_{Y}(d_0b+r_0)=\begin{cases}
q(q+1)+1 &\mbox{ if } d_0 \in \Z \mbox{ and } r_0=0\\
q(q+1) &\mbox{ if } d_0 \in \Z \mbox{ and } 1\le r_0 \le b-1.
\end{cases}
\end{align*}
\end{corollary}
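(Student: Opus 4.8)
The plan is to identify $P_Y$ as the unique quasi-polynomial attached to the Hilbert function by Theorem~\ref{serre}, and then to read its constant values directly off the closed forms for $H_Y(d)$ obtained in Theorems~\ref{hilbertfunc1} and~\ref{hilbertfunc2}. First I would invoke Theorem~\ref{serre}(i) to get a unique quasi-polynomial $P_Y$ with $H_Y(d)=P_Y(d)$ for all $d\gg 0$. Because $Y$ is finite, $S/I(Y)$ has Krull dimension $1$, so $P_Y$ has degree $0$; equivalently, the form $HS_Y(t)=p(t)/(1-t^b)$ coming from Corollary~\ref{corhilbertseries} (after cancelling the factor $(1-t)^2$ that produces the polynomial $p(t)$ in~\eqref{11bhssum}) shows that the period of $P_Y$ divides $b$ and that $P_Y$ is constant on each residue class modulo $b$. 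Hence it suffices to evaluate $H_Y(d)$ for one large $d$ in each class $r_0=d \bmod b$.

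Next I would compute the $a$-invariant to locate the stabilized range precisely. By Theorem~\ref{serre}(ii),
\[
a(Y)=\deg HS_Y(t)=\deg p(t)-\deg(1-t^b)=(qb+q-1)-b=(q-1)(b+1),
\]
and $P_Y$ agrees with $H_Y$ for every $d>a(Y)$. I then feed $d>(q-1)(b+1)$ into the two case analyses. When $q\le b$, writing $d=d_0b+r_0$ the inequality forces either $d_0\ge q$ or ($d_0=q-1$ and $q\le r_0<b$); in the first situation the last two branches of Theorem~\ref{hilbertfunc1} give $q(q+1)+1$ for $r_0=0$ and $q(q+1)$ for $0<r_0<b$, while in the second the branch $(d_0+1)(q+1)$ equals $q(q+1)$ since $d_0+1=q$. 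When $q>b$, writing $d=q+kb+r$ the same inequality forces $k\ge q-1$, so Case~IV (at $k=q-1$) and Case~V (at $k\ge q$) of Theorem~\ref{hilbertfunc2} both collapse to $H_Y(d)=q^2+r_0+\kappa$; using $r_0<b<q$ this is $q(q+1)+1$ for $r_0=0$ and $q(q+1)$ for $1\le r_0\le b-1$.

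In all cases the value of $H_Y(d)$ for $d>a(Y)$ depends only on whether $b\mid d$, and equals the two constants in the statement. Since these describe a quasi-polynomial of period $b$ agreeing with $H_Y$ for $d\gg 0$, uniqueness in Theorem~\ref{serre}(i) identifies it with $P_Y$, and extending the constants to all $d_0\in\Z$ yields the claimed formula. As a sanity check I note that $q(q+1)+1=|Y|$ and that the class $r_0=0$ reproduces the regularity set of Corollary~\ref{cor:reg}, while the oscillation down to $q(q+1)=|Y|-1$ on the other classes reflects the failure of the evaluation map to be surjective in degrees that are not multiples of $b$.

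The hard part is the bookkeeping in the second paragraph: one must verify that every $d_0$- (or $k$-)dependent transient branch of Theorems~\ref{hilbertfunc1} and~\ref{hilbertfunc2} occurs at $d\le a(Y)$ and that the surviving branches really do collapse to the same constant. The delicate point is that the exact value $a(Y)=(q-1)(b+1)$ is precisely the threshold that makes the borderline branches ($d_0=q-1$ when $q\le b$, and $k=q-1$ when $q>b$) land on the limiting value rather than strictly below it, so the computation of $a(Y)$ and the identification of these borderline branches must be carried out carefully and in tandem.
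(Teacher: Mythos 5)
Your proposal is correct and follows essentially the same route as the paper: both invoke Theorem \ref{serre}, compute $a(Y)=(q-1)(b+1)$ as the degree of the Hilbert series, and then verify that for $d>a(Y)$ the surviving branches of Theorems \ref{hilbertfunc1} and \ref{hilbertfunc2} (including the borderline cases $d_0=q-1$, resp.\ $k=q-1$) collapse to $q(q+1)+1$ when $b\mid d$ and to $q(q+1)$ otherwise. The only cosmetic difference is that the paper delegates this case-checking to the analysis already carried out in the proof of Corollary \ref{cor:reg}, whereas you redo the bookkeeping directly from the Hilbert function formulas.
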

\begin{proof} We need to prove that $H_Y(d)=P_Y(d)$ for all 
$$d\ge a(Y)+1=(q-1)(b+1)+1=(q-1)b+q=qb+q-b.$$ 
The proof of Corollary \ref{cor:reg} reveals that $H_Y(d)<q(q+1)$ for all $d\le a(Y)$, and $H_Y(d)=q(q+1)$ if and only if $d=(q-1)b+r_0$ with $q\le r_0 <b$ or $d=d_0b+r_0$ with $d_0\ge q$ and $0 < r_0 <b$, in Case A where we have $q\le b$. Finally, if $d=d_0b$ with $d_0\ge q$, $H_Y(d)=q(q+1)+1$, proving the assertion for Case A.  
As for Case B where we have $q>b$, we get $d\ge a(Y)+1=qb+q-b>qb$. Thus, $H_{Y}(d)=q^2+q$ as the proof of the Corollary \ref{cor:reg} shows, when $0<r_0<q$ and $d>qb$. Similarly, $H_Y(d_0b)=q(q+1)+1$ if $d_0>q$. Therefore, we have $H_Y(d)=P_Y(d)$ in this case also as desired.
\end{proof}


Since the relation between the dimension of codes over $Y$ and the Hilbert function of $Y$ given in \eqref{dimhil} we get the following result.
\begin{corollary}\label{dimcor1}Let $d=d_0b+r_0$ where $0\le r_0 <b$ and $Y=\P(1,1,b)(\F_q)$. If $q\le b$, then we get the dimension of the code $C_{d,Y}$ as follows.
\begin{equation}\label{dim1} \dim_{\F_q}(C_{d,Y})  = \begin{cases}
d_0(q+1)+r_0+1 &\mbox{ if } 0\le d_0\le q-1 \mbox{ and } 0\le r_0\le q-1\\
(d_0+1)(q+1) &\mbox{ if } 0\le d_0\le q-1 \mbox{ and } q\le r_0 <b\\
q(q+1) &\mbox{ if } d_0\ge q \mbox{ and } 0<r_0< b\\
q(q+1)+1 &\mbox{ if } d_0\ge q \mbox{ and } r_0=0
\end{cases}
\end{equation}
\end{corollary}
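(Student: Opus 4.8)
The plan is to derive this statement directly from two ingredients that are already in place: the dimension--Hilbert function identity recorded in the introduction, and the explicit Hilbert function computation carried out for the case $q\le b$. The corollary is genuinely a \emph{corollary}: it contains no new combinatorial content beyond what has already been established, so the task is simply to transport the formula from one invariant to another via a linear-algebraic isomorphism.

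The first step is to invoke the isomorphism $S_d/I_d(Y)\cong C_{d,Y}$ coming from the evaluation map $ev_Y$, whose kernel is precisely $I_d(Y)$. Taking dimensions over $\F_q$ gives
\[
\dim_{\F_q}(C_{d,Y})=\dim_{\F_q}S_d-\dim_{\F_q}I_d(Y)=H_Y(d),
\]
which is exactly the relation \eqref{dimhil} together with Definition \ref{def:hil}. This reduces the problem entirely to evaluating $H_Y(d)$ for the points $Y=\P(1,1,b)(\F_q)$ in the regime $q\le b$.

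The second step is to substitute the value of $H_Y(d)$ supplied by Theorem \ref{hilbertfunc1}. Writing $d=d_0b+r_0$ with $0\le r_0<b$ as in the hypothesis, the four-case formula \eqref{HS1} for $H_Y(d)$ under the assumption $q\le b$ is term-for-term identical to the four cases claimed in \eqref{dim1}. Hence the asserted expression for $\dim_{\F_q}(C_{d,Y})$ follows immediately, with the case subdivision inherited verbatim from Theorem \ref{hilbertfunc1}.

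There is no real obstacle to overcome here, since all the effort has already been expended in proving Theorem \ref{hilbertfunc1}; the only point worth stating carefully is that the identity $\dim_{\F_q}(C_{d,Y})=H_Y(d)$ holds for \emph{every} $d\in\N W$ (so in particular for all $d$ appearing in the case analysis), which is guaranteed by the exactness of $0\to I_d(Y)\to S_d\to C_{d,Y}\to 0$. Once that is noted, the proof is a one-line appeal to \eqref{dimhil} and Theorem \ref{hilbertfunc1}.
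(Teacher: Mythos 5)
Your proposal is correct and follows exactly the paper's route: the paper derives this corollary by citing the identity $\dim_{\F_q}(C_{d,Y})=H_Y(d)$ from \eqref{dimhil} and then reading off the four-case formula from Theorem \ref{hilbertfunc1}. Your additional remark on the exactness of $0\to I_d(Y)\to S_d\to C_{d,Y}\to 0$ is a harmless (and accurate) elaboration of why \eqref{dimhil} holds for every $d\in\N W$.
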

\begin{corollary}\label{dimcor2}Let $d=d_0b+r_0$ with $0\le r_0 <b$ for all the cases and $d=q+kb+r$ with $0\leq r <b$ for the cases where $d>q$. If $q>b$ then 
\begin{equation*}\label{dim2} \dim_{\F_q}(C_{d,Y})= \begin{cases}
d+1 &\mbox{ if } 0\le d\le b-1,\\
(d_0+1)(d+1-bd_0/{2})
&\mbox{ if } b\le d\le q,\\
(d_0-k)(d-(d_0+k+1)b/2)+q(k+1)+1 &\mbox{ if } q+1\le d\le qb\\
(k+q)(k-q+1)b/2+q(k+d+1)-dk-bd_0+\kappa &\mbox{ if } d>qb \mbox{, } k<q,\\
q^2+r_0 +\kappa &\mbox{ if } d>qb \mbox{, } k\ge q,


\end{cases}
\end{equation*}
where 
\begin{equation*} \kappa = \begin{cases}
q-r_0, &\mbox{ if } 0<r_0<q,\\
q+1 &\mbox{ if } r_0=0.
\end{cases}
\end{equation*}   
\end{corollary}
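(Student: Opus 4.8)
The plan is to deduce this directly from the identification of the code dimension with the weighted Hilbert function, combined with the explicit Hilbert function computation already carried out in Theorem \ref{hilbertfunc2}. Since the hypotheses coincide ($q>b$) and the case labels are set up identically in both statements ($d=d_0b+r_0$ with $0\le r_0<b$ throughout, and $d=q+kb+r$ with $0\le r<b$ when $d>q$, together with the auxiliary quantity $\kappa$), the five-branch formula for $\dim_{\F_q}(C_{d,Y})$ will be literally the five-branch formula for $H_Y(d)$.

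First I would recall the isomorphism $S_d/I_d(Y)\cong C_{d,Y}$ coming from the fact that $I_d(Y)$ is exactly the kernel of the evaluation map $ev_Y$ defined in \eqref{evmap}. Taking dimensions over $\F_q$ and invoking Definition \ref{def:hil} yields the equality already recorded in \eqref{dimhil}, namely $\dim_{\F_q}(C_{d,Y})=H_Y(d)$, valid for every $d\in\N W$; this is the only structural input needed. I would then substitute the value of $H_Y(d)$ furnished by Theorem \ref{hilbertfunc2}, which treats precisely the regime $q>b$, and match the branches $0\le d\le b-1$, $b\le d\le q$, $q+1\le d\le qb$, and the two subcases $d>qb$ with $k<q$ and $k\ge q$ against the statement of the corollary.

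There is no genuine obstacle here: all of the arithmetic and the delicate coefficient bookkeeping (the coefficients $C_i(d)$ of the polynomial $p(t)$ and their summation across the relevant ranges of $i$) has already been discharged in the proof of Theorem \ref{hilbertfunc2}. The only point requiring attention is purely notational, namely verifying that the definitions of $d_0$, $r_0$, $k$, $r$, and $\kappa$ in the corollary agree verbatim with those fixed in Theorem \ref{hilbertfunc2}, so that no branch is inadvertently shifted or relabeled. Once this consistency is confirmed, the corollary is immediate.
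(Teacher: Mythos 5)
Your proposal is correct and is exactly the paper's argument: the corollary follows immediately from the identity $\dim_{\F_q}(C_{d,Y})=H_Y(d)$ recorded in \eqref{dimhil}, combined with the formula for $H_Y(d)$ in the regime $q>b$ established in Theorem \ref{hilbertfunc2}. Your additional care in checking that $d_0$, $r_0$, $k$, $r$, and $\kappa$ are defined identically in both statements is a sensible precaution, but there is nothing further to add beyond what the paper does.
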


\subsection{Examples for Regularity Set and Dimension}\label{sec:exreg}
In this part, the values of the Hilbert function of $Y=\P(1,1,b)(\F_q)$ which is one of the main results given in this article will be presented. In the tables given in this part, we emphasize the first element of the regularity set which is the set of degrees in which the Hilbert function reaches its maximum value. 
\begin{example}
Let $q=5$, $X=\P(1,1,b)$ and $Y=X(\F_q)$. We will give the values of the Hilbert Function of $Y$ in the following tables for $b=2,5,7$, respectively. And by Corollary \ref{dimcor1} and Corollary \ref{dimcor2} these will be the dimensions of codes over $Y$. We note that these values were calculated using Macaulay2 \cite{M2}.
\begin{table}[H]
\addtolength{\tabcolsep}{-3pt}
\centering
\caption{The values of Hilbert Function of $\P(1,1,2)$ over $\F_{5}$}
\vskip-1.2em
\begin{tabular}{|l|l|l|l|l|l|l|l|l|l|l|l|l|l|l|l}\hline
$d$     & 0 & 1 & 2 & 3 & 4 & 5 & 6 & 7 & 8 & 9 & 10 & 11 & 12   \\ \hline
$H_{Y}(d)$ & 1 & 2 & 4 & 6& 9& 12 & 15 & 18 &21 & 24 & 27& 28 & 30    \\ \hline
$d$& 13 & {\color{darkred}\textbf{14}} & 15 & 16 & 17 & 18 & 19 & 20 & 21 & 22 &23 &24 &25\\ \hline
$H_{Y}(d)$& 30& 31 & 30 & 31 & 30 & 31 & 30 & 31 & 30 &31 & 30 & 31 &30 \\ \hline
\end{tabular}\end{table}
\vskip-1.6em
\begin{table}[H]
\addtolength{\tabcolsep}{-3pt}
\centering
\caption{The values of Hilbert Function of $\P(1,1,5)$ over $\F_{5}$}
\vskip-1.2em
\begin{tabular}{|l|l|l|l|l|l|l|l|l|l|l|l|l|l|l|l}
\hline
$d$     & 0 & 1 & 2 & 3 & 4 & 5 & 6 & 7 & 8 & 9 & 10 & 11 & 12   \\ \hline
$H_{Y}(d)$ & 1 & 2 & 3 & 4 & 5 & 7 & 8 & 9 & 10 & 11 & 13 & 14& 15    \\ \hline
$d$& 13 & 14 & 15 & 16 & 17 & 18 & 19 & 20 & 21 & 22 &23 &24 &{\color{darkred}\textbf{25}}\\ \hline
$H_{Y}(d)$& 16& 17 & 19 & 20 & 21 & 22 & 23 & 25 & 26 & 27 & 28 & 29 & 31 \\ \hline
$d$& 26 & 27 & 28 & 29 & 30 & 31 & 32 & 33 & 34 & 35 & 36 & 37 & 38\\ \hline
$H_{Y}(d)$& 30& 30 & 30& 30 & 31 & 30 & 30 & 30 & 30 & 31 & 30 & 30 & 30 \\ \hline
\end{tabular}
\end{table}
\vskip-1.6em
\begin{table}[H]
\addtolength{\tabcolsep}{-3pt}
\centering
\caption{The values of Hilbert Function of $\P(1,1,7)$ over $\F_{5}$}
\vskip-1.2em
\begin{tabular}{|l|l|l|l|l|l|l|l|l|l|l|l|l|l|l|l}
\hline
$d$     & 0 & 1 & 2 & 3 & 4 & 5 & 6 & 7 & 8 & 9 & 10 & 11 & 12   \\ \hline
$H_{Y}(d)$ & 1 & 2 & 3 & 4 & 5 & 6 & 6 & 7 & 8 & 9 & 10 & 11& 12    \\ \hline
$d$& 13 & 14 & 15 & 16 & 17 & 18 & 19 & 20 & 21 & 22 &23 &24 &25\\ \hline
$H_{Y}(d)$& 12& 13 & 14 & 15 & 16 & 17 & 18 & 18 & 19 & 20 & 21 & 22 & 23 \\ \hline
$d$& 26 & 27 & 28 & 29 & 30 & 31 & 32 & 33 & 34 & {\color{darkred}\textbf{35}} & 36 & 37 & 38\\ \hline
$H_{Y}(d)$& 24& 24 & 25& 26 & 27 & 28 & 29 & 30 & 30 & 31 & 30 & 30 & 30 \\ \hline
$d$& 39 & 40 & 41 & 42 & 43 & 44 & 45 & 46 & 47 & 48 & 49 & 50 & 51\\ \hline
$H_{Y}(d)$& 30& 30& 30& 31 & 30 & 30 & 30 & 30 & 30 & 30 & 31 & 30 & 30 \\ \hline
\end{tabular}
\end{table}
\end{example}

\begin{remark}
We note that in these tables, we emphasize the first element $d$ of the regularity set in boldface red.   
\end{remark}

\section{Minimum Distance}\label{sec:minimum}
Recall that $Y=\{[1:y_2:y_3] : y_2,y_3\in \F_q\} \cup \{[0:y_2:1] : y_2\in \F_q\}\cup \{[0:1:0]\}$ is the set of $\F_q$-rational points of the weighted projective space $X=\P(1,1,b)$ over the algebraically closed field $\overline{\F}_q$, where $b$ is a positive integer.

\begin{lemma}\label{lm:mdis:1}
If $0<d<b$ and then $\delta(C_{d,Y}) \ge q$. 
\end{lemma}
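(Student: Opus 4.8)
The plan is to reduce to binary forms and then count the nonzero coordinates of an arbitrary codeword stratum by stratum. First I would observe that when $0<d<b$ the variable $x_3$ cannot occur in a monomial of degree $d$: a monomial $x_1^{m_1}x_2^{m_2}x_3^{m_3}$ has weighted degree $m_1+m_2+bm_3$, and $bm_3\le d<b$ forces $m_3=0$. Consequently every $f\in S_d$ is an ordinary homogeneous binary form $g(x_1,x_2)=\sum_{i=0}^{d}a_i x_1^{d-i}x_2^{i}$, and the codeword $\ev_Y(f)$ is completely determined by $g$. Since the Hamming weight of $\ev_Y(f)$ is the number of points of $Y$ at which $g$ does not vanish, it suffices to show this number is at least $q$ whenever $\ev_Y(f)\ne 0$.

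Next I would evaluate $f$ on the three strata of $Y$. On the affine stratum $[1:y_2:y_3]$ the value $f(1,y_2,y_3)=g(1,y_2)$ does not depend on $y_3$, so each $y_2\in\F_q$ contributes a block of $q$ equal coordinates; writing $s:=|\{y_2\in\F_q:g(1,y_2)=0\}|$, these strata carry exactly $q(q-s)$ nonzero coordinates. On the locus $x_1=0$ the form collapses to $g(0,x_2)=a_d x_2^{d}$, so $f([0:y_2:1])=a_d y_2^{d}$ and $f([0:1:0])=a_d$, where $a_d$ is the coefficient of $x_2^{d}$ in $g$. The count then bifurcates according to whether $a_d$ vanishes. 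If $a_d\ne 0$, the point $[0:1:0]$ already contributes a nonzero coordinate and the $q-1$ points $[0:y_2:1]$ with $y_2\ne 0$ contribute $q-1$ more, giving total weight $q(q-s)+(q-1)+1=q(q-s+1)\ge q$, since $s\le q$ trivially. If $a_d=0$, the entire locus $x_1=0$ evaluates to $0$ and the weight is $q(q-s)$; here the hypothesis $\ev_Y(f)\ne 0$ forces at least one nonvanishing affine block, i.e. $s\le q-1$, so again the weight is at least $q$. In both cases every nonzero codeword has weight at least $q$, which proves $\delta(C_{d,Y})\ge q$.

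The computation is short once the setup is in place; the point that needs care—and the heart of the matter—is that on the affine stratum the zeros occur in blocks of size $q$, because $g$ is blind to the coordinate $y_3$, so a codeword can be light only if $g$ has many $\F_q$-rational roots. The degenerate locus $x_1=0$ must then be handled separately through the coefficient $a_d$, and the nonvanishing hypothesis $\ev_Y(f)\ne 0$ is invoked precisely in the case $a_d=0$ to force $s\le q-1$. I expect this last bookkeeping distinction, rather than any genuine difficulty, to be the only delicate step.
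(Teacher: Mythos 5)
Your proof is correct and follows essentially the same route as the paper's: both reduce to binary forms in $x_1,x_2$ since $d<b$ excludes $x_3$, both exploit that the value at $[1:y_2:y_3]$ is independent of $y_3$ (so affine zeros come in blocks of $q$), and both split into the cases $x_1\mid F$ versus $x_1\nmid F$ (your $a_d=0$ versus $a_d\ne 0$), invoking the hypothesis of a nonzero codeword exactly in the first case. Your counting of nonzero coordinates is just the complementary formulation of the paper's bound of $q^2+1$ on the number of zeros.
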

\begin{proof} Let $F\in S_d\setminus \{0\}$. Since $\deg(x_3)=b$ and $d<b$, it is clear that $F\in \F_q[x_1,x_2]$. So, we assume that $F=x_{1}^{\ell}F'(x_1,x_2)$ where $F'\in \F_q[x_1,x_2]$ is a homogeneous polynomial of degree $d-\ell$, not divisible by $x_1$. Thus, $F'(x_1,x_2)=x_1F_1(x_1,x_2)+F_2(x_2)$ where $F_1$ and $F_2$ are homogeneous polynomials of degrees $d-\ell-1$ and $d-\ell$, respectively. So, $F_2(x_2)=cx_2^{d-\ell}$ for some $c\in \F_q^*$. Therefore,
\begin{equation}\label{e:TwoVariableF}
    F(x_1,x_2,x_3)=x_{1}^{\ell}[x_1F_1(x_1,x_2)+cx_2^{d-\ell}].
\end{equation}

\textbf{Case I:} Let $\ell>0$. Then, $F$ has $q+1$ roots $[y_1:y_2:y_3]$ with $y_1=0$. Since $f(x_2):=F(1,x_2,x_3)\in \F_q[x_2]\setminus \{0\}$ is univariate, it can have at most $q$ roots $y_2\in \F_q$. If $f$ has $q$ roots then $F$ vanishes at $q^2+q+1$ points in $Y$ and hence $F\in I(Y)$. Therefore, the codeword  $\ev_{d,Y}(F)=0$. Thus, in order to get a non-trivial codeword $f$ may have at most $q-1$ roots and then $F$ has at most $(q-1)q$ roots with $y_1=1$, $y_2\in V(f)\subset \F_q$ and $y_3\in \F_q$. Hence, $F$ can have at most $q+1+(q-1)q=q^2+1$ roots in $Y$.

\textbf{Case II:}
Let $\ell=0$.  If  $F(0,y_2,y_3)=0$ we get $y_2=0$ from (\ref{e:TwoVariableF}). So, $F$ has only one root $[0:0:1]$ in $Y$ with $y_1=0$.

In order to study the roots where $y_1=1$, we consider the univariate polynomial $f(x_2)=F(1,x_2,x_3)$ again. Since $f(x_2)$ has at most $q$ roots $y_2\in \F_q$, $F$ can have at most $q^2$ roots $[1:y_2:y_3]$ in $Y$. In total $F$ can have at most $q^2+1$ roots in $Y$. 

Therefore, a codeword $\ev_{d,Y}(F)=(F(P_1),\dots,F(P_N))$, where $N=q^2+q+1$, can have at least $q=N-(q^2+1)$ non-zero components in both cases, yielding $\delta(C_{d,Y}) \ge q$.
\end{proof}
\begin{theorem} \label{t:mindis}
The minimum distance of the code $C_{d,Y}$ is given by 
\begin{equation}\label{dis1} \delta=  \begin{cases}
q &\mbox{ if } q\le d < b\\
q(q-d+1)&\mbox{ if } d< q\le b\\
q(q-d+1)&\mbox{ if } d < b< q\\
q(q-d+1)&\mbox{ if } b\leq d <q\\
q-k&\mbox{ if } b\le q\leq d, d=q+kb+r \mbox{ with } 0\le r<b \mbox{ and } 0\le k \le q-2\\
q-k&\mbox{ if } q< b \le d, d=q+kb+r \mbox{ with } 0\le r<b \mbox{ and } 0\le k \le q-2\\
1 & \mbox{ if } d=q+kb+r \mbox{ with } 0\le r<b \mbox{ and } k \ge q-1
\end{cases}
\end{equation}
\end{theorem}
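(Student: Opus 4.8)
The plan is to compute $\delta$ from the identity $\delta=N-e_d$, where $N=q^2+q+1$ and $e_d=\max\{\,|Z_Y(F)| : F\in S_d,\ \ev_{d,Y}(F)\neq 0\,\}$ is the largest number of points of $Y$ at which a codeword can vanish without lying in $I(Y)$; here $Z_Y(F)$ denotes the zero set of $F$ in $Y$. So I must prove a sharp upper bound on $|Z_Y(F)|$ and exhibit a form attaining it in each range. The organizing device is the projection $\pi\colon Y\setminus\{[0:0:1]\}\to\P^1(\F_q)$, $[x_1:x_2:x_3]\mapsto[x_1:x_2]$, whose fibres are the $q$ affine lines $\{[1:b_0:x_3]:x_3\in\F_q\}$ over $[1:b_0]$ and the single vertical line $\{[0:1:x_3]:x_3\in\F_q\}$ over $[0:1]$, giving $q+1$ fibres of size $q$ together with the isolated point $[0:0:1]$. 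Writing $F=\sum_k H_k(x_1,x_2)x_3^k$ with $H_k$ homogeneous of degree $d-kb$, the restriction of $F$ to any fibre is a univariate polynomial in $x_3$ of degree at most $d_0=\lfloor d/b\rfloor$.

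For the lower bound on $\delta$ (equivalently the upper bound on $e_d$) I would count zeros fibre by fibre. On each fibre the restriction $F_P(x_3)$ either vanishes identically, contributing $q$ zeros, or is nonzero, contributing at most $\min(d_0,q-1)$ zeros. The number $v$ of fibres on which $F$ vanishes identically is controlled by the common zero locus in $\P^1(\F_q)$ of the coefficient forms $H_k$: factoring out $D=\gcd_k H_k$, of degree $s$ say, one has $v\le s$, while on the remaining fibres the $x_3$-degree drops to $\lfloor(d-s)/b\rfloor$. Maximizing $vq+(q+1-v)\min(\lfloor(d-s)/b\rfloor,q-1)$ subject to $v\le\min(s,q+1)$ then yields $e_d\le dq+1$ when $d<q$ (the maximum occurring at $s=d$, where $F$ involves no $x_3$), and $e_d\le q^2+1+\min(k,q-1)$ when $d\ge q$; the trailing $+1$ records whether $F(0,0,1)=0$, i.e. whether $[0:0:1]$ is a zero. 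The decisive subtlety, which I expect to be the main obstacle, is the regime $d_0\ge q$: there a fibre restriction can vanish as a function on $\F_q$ (being divisible by $x_3^q-x_3$) without being the zero polynomial, so $v$ is no longer governed by the naive degree argument above. It is precisely this phenomenon that fixes the thresholds and forces the same split $q\le b$ versus $q>b$, and the same subdivisions of $d$, as in Theorems \ref{hilbertfunc1} and \ref{hilbertfunc2}.

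For the matching upper bound on $\delta$ I would exhibit explicit forms. When $d<q$, the product $F=\prod_{i=1}^{d}(x_2-a_ix_1)$ with distinct $a_i\in\F_q$ vanishes on the $dq$ affine points lying over $[1:a_1],\dots,[1:a_d]$ and on $[0:0:1]$, giving $dq+1$ zeros and weight $q(q-d+1)$; this settles the second through fourth lines, the hypotheses on $b$ affecting only whether $x_3$ is available, not the count. For $q\le d<b$ the form $F=x_2^{d}-x_1^{q-1}x_2^{\,d-q+1}$ vanishes on all affine points and on $[0:0:1]$, giving weight $q$ (the lower bound $\delta\ge q$ here being exactly Lemma \ref{lm:mdis:1}). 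When $d\ge q$ I would take $F=(x_2^q-x_1^{q-1}x_2)\,G$ with $\deg G=d-q$: the first factor kills every affine fibre ($q^2$ zeros), $[0:0:1]$ is automatically a zero, and since $F(0,1,x_3)=G(0,1,x_3)$ has $x_3$-degree at most $k$, one may choose $G$ so that this polynomial has exactly $\min(k,q-1)$ distinct roots in $\F_q$. This produces $q^2+1+\min(k,q-1)$ zeros, hence weight $q-k$ for $k\le q-2$ and weight $1$ for $k\ge q-1$. In each case one must check $F\notin I(Y)$ using the generators of Corollary \ref{11bideal}, so that the codeword is genuinely nonzero.

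Finally I would assemble the ranges: the first line ($q\le d<b$) and the fifth/sixth lines ($d\ge q$, $d\ge b$) all read off $e_d=q^2+1+\min(k,q-1)$ as $\delta=q-k$ (the first line being the case $k=0$); the last line follows once $k\ge q-1$, since then the vertical fibre can be made to carry $q-1$ zeros on top of the $q^2+1$ forced ones, so $e_d=N-1$ and $\delta=1$; and the middle lines ($d<q$) follow from $e_d=dq+1$ with the product construction. The heart of the proof, and the step I expect to be hardest, is the uniform upper bound on the number of fully vanishing fibres in the range $d_0\ge q$, which is exactly where the weight $b$ interacts genuinely with $q$ and where the final case distinctions are forced.
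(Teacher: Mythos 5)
Your construction half is sound, and in fact cleaner than the paper's: the witnesses $x_2^{d-q+1}(x_2^{q-1}-x_1^{q-1})$ for $q\le d<b$ and $(x_2^q-x_1^{q-1}x_2)\,G$ for $d\ge q$ are correct, unified across cases, and easy to check against the generators of Corollary \ref{11bideal} (the paper's own Case I witness actually degenerates to an element of $I(Y)$ when $q<d<b$, which your choice avoids). The genuine gap is in the lower-bound half, and it is the one you flag yourself but never close: when $d_0=\lfloor d/b\rfloor\ge q$, i.e. $d\ge qb$, your control $v\le s=\deg\gcd_k H_k$ on the number of fully vanishing fibres is simply false, because a fibre restriction can be divisible by $x_3^q-x_3$ and vanish on all of $\F_q$ while the coefficient forms $H_k$ have no common zero. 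Concretely, take $q=5$, $b=2$, $d=qb=10$ and $F=x_3^5-x_2^8x_3$. The coefficient forms are $1$ and $-x_2^8$, so $s=0$ and your scheme certifies at most $(q+1)\min(d_0,q-1)+1=25$ zeros; but $F$ vanishes identically on the five fibres over $[1:c]$, $c\in\F_5^*$, and over $[0:1]$, giving $26$ zeros in $Y$. Since $d=10=q+2b+1$ has $k=2\le q-2$, this lies squarely inside line 5 of the theorem (and this regime $qb\le d$, $k\le q-2$ is nonempty exactly when $b\le q$). So in that range your method does not merely become harder — as stated it yields a bound that explicit polynomials violate, and the inequality $\delta\ge q-k$ is left unproved precisely where the theorem has content. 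Declaring this "the main obstacle" and asserting that it "fixes the thresholds" is a diagnosis, not an argument.

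For comparison, the paper meets this difficulty with a different slicing (it fixes $y_3$ and studies $f(x_2)=F(1,x_2,y_3)$, tracking the divisor set $J=\{y_3:\,y_3x_1^b-x_3\mid F\}$ rather than your vertical fibres), and its key device is reduction modulo the generator $f_3=x_1^qx_2-x_1x_2^q\in I(Y)$: after replacing every term divisible by $x_1x_2^q$ by one divisible by $x_1^qx_2$, functional vanishing of $f$ on all of $\F_q$ forces genuine factors $x_2-y_2x_1$ or $y_3x_2^b-y_2^bx_3$, and a degree count ($d-|J|b\ge(q-1)(q-|J|)b$, leading to $q>(2q-3)b\ge 4q-6$) produces a contradiction. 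To finish your proof you would need the analogous mechanism in the $x_3$-direction — for instance, reduction modulo $f_1=x_2^{(q-1)b+1}x_3-x_2x_3^q$ and $f_2=x_1^{(q-1)b+1}x_3-x_1x_3^q$ to cap the $x_3$-degree of fibre restrictions below $q$ without changing the codeword, followed by an argument that too many functionally vanishing fibres force $F\in I(Y)$ — and none of this appears in the proposal. A minor correction: the dichotomy you anticipate inheriting from Theorems \ref{hilbertfunc1} and \ref{hilbertfunc2}, namely $q\le b$ versus $q>b$, is not the relevant one here (lines 5 and 6 of the statement coincide, and the paper's Case VI just refers back to Case V); the split your method actually needs is $d<qb$, where your fibre count is sound, versus $d\ge qb$, where it must be replaced.
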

\begin{proof}
\textbf{Case I:} $\mathbf{q\le d<b:}$
The polynomial $$F_0=x_1 x_2^{d-q} \prod_{y_2\in \F_q^*} (x_2-y_2x_1) \in S_d,$$ have $q+1$ roots $[y_1:y_2:y_3]$ with $y_1=0$ and $q(q-1)$ roots with $y_1=1$ in $X$. In total, $F$ has $q^2+1$ roots in $Y$. We know that $N=q^2+q+1$. Therefore, we get $\delta \le N-(q^2+1)=q.$ So, we get $\delta=q$ by Lemma \ref{lm:mdis:1}.

\textbf{Case II:} $\mathbf{d<q\le b:}$
For a non-zero $F\in S_d$, we set $F=x_{1}^{\ell}F'(x_1,x_2)$ where $F'$ is a homogeneous polynomial of degree $d-\ell$ of the form $F'(x_1,x_2)=x_1F'_1+cx_2^{d-\ell}$ with $c\in \F_q^*$, as in the proof of Lemma \ref{lm:mdis:1}. Notice that the univariate polynomial $f(x_2)=F(1,x_2)$ has degree $d-\ell$ and can have at most $d-\ell$ roots $y_2\in \F_q$. 

If $\ell >0$, then $F$ will have $q+1$ roots with $y_1=0$ and have at most $q(d-\ell)$ roots with $y_1=1$. Altogether, $F$ can have at most $qd+1+q(1-\ell)\leq qd+1$ roots, since $\ell\geq 1$. 

If $\ell=0$, then $F$ will have $1$ root with $y_1=0$ and have at most $qd$ roots with $y_1=1$. Altogether, $F$ can have at most $qd+1$ roots. 

The polynomial $$F_0=x_1 \prod_{y_2=1}^{d-1} (x_2-y_2x_1) \in S_d,$$ have $q+1$ roots $[y_1:y_2:y_3]$ with $y_1=0$ and $q(d-1)$ roots with $y_1=1$ in $X$. Thus, it has $qd+1$ roots in $Y$. Therefore, $\delta=q^2+q+1-(qd+1)=q(q+1-d)$.

\textbf{Case III:} $\mathbf{d<b<q:}$ The argument used in the proof of the previous situation applies here and we get $\delta=q(q-d+1).$

\textbf{Case IV:} $\mathbf{b\leq d<q:}$ Let $d=d_0b+r_0$ where $0\le r_0 <b$. For a non-zero $F\in S_d$, we consider the following key subset
$$J=\{y_3\in \F_q: y_3x_1^b-x_3 \mbox{ divides }F\}.$$
It is clear that $|J|\leq d_0$. It follows that $F(1,y_2,y_3)=0$ whenever $y_3\in J$ and there are $q|J|$ such points in $Y$ with $y_1=1$. 

On the other hand, the polynomial $f(x_2)=F(1,x_2,y_3)\in \F_q[x_2]\backslash \{0\}$ if $y_3\notin J$. This is because, in general, we have $F(x_1,x_2,x_3)=(y_3x_1^b-x_3)H+r(x_1,x_2)$ for some $H\in \F_q[x_1,x_2,x_3]$ and non-zero homogeneous polynomial $$r(x_1,x_2)=\sum\limits_{i=0}^{d}r_ix_1^{i}x_2^{d-i}$$ of degree $d$. So, if $f\equiv 0$, then $r(1,x_2)=\sum\limits_{i=0}^{d}r_ix_2^{d-i}$ is a zero polynomial, i.e. $r_i=0$ for all $i$, meaning that $r=0$ as a polynomial, a contradiction.  So, if $y_3 \notin J$ then $f$ has at most $\deg_{x_2}(F)=\deg (f)$ many roots. Therefore $F$ has $|\F_q\backslash J|\deg_{x_2}(F)=(q-|J|)\deg_{x_2}(F)$ many such roots at most. Thus, we have
\begin{equation}\label{eq:ub1}
|V_{Y}(F)\cap U_1|\le q|J|+(q-|J|)\deg_{x_2}(F),
\end{equation}
where $V_{Y}(F)=\{P\in Y : F(P)=0\}$ and $U_1=\{[x_1:x_2:x_3] \in Y : x_1=1\}$. 

Consider now the following general description of a homogeneous polynomial of degree $d=d_0b+r_0$, where $0\leq r_0 <b$:
$$F(x_1,x_2,x_3)=x_1^{\ell}\prod\limits_{y_3\in J}^{}(y_3x_1^b-x_3) F'(x_1,x_2,x_3) \text { where } F'=x_1F_1+F_2,$$
and $F_2(x_2,x_3)$ is a homogeneous polynomial of degree $d-\ell-|J|b$ with $x_1 \nmid F_2$. Let us estimate the roots of $F$ with $x_1=0$.

If $\ell>0$, then there are $q+1$ roots of $F$ with $x_1=0$. Thus, we have
\begin{eqnarray}
|V_{Y}(F)| &\le& q+1+q|J|+(q-|J|)\deg_{x_2}(F) \nonumber \\
&\leq& q+1+q|J|+(q-|J|)(d-\ell-|J|b). \nonumber\\
&=& q+1+q(d-\ell)+|J|(q-qb-d+\ell+|J|b) \nonumber\\
&\leq& q+1+q(d-\ell) \leq qd+1, \nonumber
\end{eqnarray}
since as we prove now $q-qb-d+\ell+|J|b\leq 0$ is satisfied:
if $b=1$, then we have $$q-qb-d+\ell+|J|b=\ell-d+|J|\leq 0;$$ 
and if $b\geq 2$, then $q+\ell \leq 2q \leq qb$ yielding the following
$$q-qb-d+\ell+|J|b \leq q-qb-d+\ell+d_0b \leq q-qb+\ell \leq 0.$$
 
If $\ell=0$, then $F(0,y_2,1)=0$ implies $F'(0,y_2,1)=0$, i.e. $F'\in  I([0:y_2:1])$. By \cite[Proposition 3.4]{MS2022}, we have $$x_1F_1+F_2=F' \in I([0:y_2:1])=\langle x_1, x_2^{b}-y_2^{b}x_3 \rangle$$ and thus $x_2^{b}-y_2^{b}x_3$ is a factor of $F_2$, when $y_2\neq 0$. Furthermore, $F_2$ can have at most $d_0-|J|$ such factors, since there are at most $d_0b$ in $d$, and $\deg_{x_2}(F_2)=d-|J|b$. Hence, there are at most $d_0-|J|$ roots of the form $[0:y_2:1]$. In this case, we have
\[F(x_1,x_2,x_3)=\prod\limits_{y_3\in J}^{}(y_3x_1^b-x_3)[x_1F_1+x_2^{r_0}\prod\limits_{y_2=1}^{d_0-|J|}(x_2^b-y_2^bx_3)] .\]
But $I([0:0:1])=\langle x_1, x_2 \rangle$ and when $r_0>0$, $F'(0,0,1)=0.$ \\
When $J \neq \emptyset$, the point $[0:1:0]$ is also a root since $y_3x_1^b-x_3$ is a factor of $F$. Altogether, we have at most $2+d_0-|J|$ roots with $x_1=0$.
 Thus, we have
\begin{eqnarray}
|V_{Y}(F)| &\le& 1+d_0+1-|J|+q|J|+(q-|J|)\deg_{x_2}(F) \nonumber\\
&\leq& 1+d_0+q|J|+(q-|J|)(d-|J|b) \nonumber\\
&=& 1+d_0+qd+|J|(q-qb-d+|J|b) \nonumber\\
&\leq& 1+qd+|J|(d_0+q-qb-d+|J|b) \nonumber\\
&\leq& qd+1, \nonumber
\end{eqnarray}
since $b>1$ implies $d_0+q \leq qb$ and so we have $d_0+q-qb-d+|J|b\leq 0$. \\ Let $J=\emptyset$. Since $\mbox{deg}_{x_2}(F)\le d$, $F$ can have at most $qd$ roots of type $[1:y_2:y_3]$ by \eqref{eq:ub1}. Next, we count the number of roots with $x_1=0$. If $\deg_{x_2}(F)=d$, then $x_{2}^{d}$ would appear in $F$ so $F\notin I(0,1,0)=\langle x_1,x_3\rangle.$ Hence, $F(0,1,0)\neq 0.$ By the same reason $F$ can not be in $I(0,y_2,1)=\langle x_1, y_2^{b}x_3-x_2^{b}\rangle$ for any $y_2\in \F_{q}^{*}.$ Therefore, the only root of $F$ could be $[0:0:1]$ when $x_1=0.$ All together $F$ can have at most $qd+1$ roots. If $\mbox{deg}_{x_2}(F)\le d-1$, then by \eqref{eq:ub1}, $F$ can have at most $q(d-1)$ roots with $x_1=1$. Since $F$ can have at most $q+1$ roots with $x_1=0$, we have
\[|V_{Y}(F)|\le q(d-1)+q+1 = qd+1.\] Hence, in any case we get $|V_{Y}(F)|\le qd+1$ for $\ell=0$.

Consider now the polynomial \[F_{0}(x_1,x_2,x_3)=\prod\limits_{y_2=1}^{d}(x_2-y_2x_1)\in S_d.\] $F_0$ vanishes at $[1:y_2:y_3]$ for $d$ of $y_2\in \F_q$ and $y_3\in \F_q$ and therefore it has $qd$ roots. Also, $F_0(0,x_2,y_3)=x_2^d=0$ implies $x_2=0$, i.e. it has only one root $[0:0:1]$ when $x_1=0$. Thus, $F_0$ has exactly $qd+1$ roots. Therefore, we get finally the following equality, 
\begin{equation*}
\delta=N-(qd+1)=q^2+q+1-qd-1=q(q-d+1),\end{equation*}
as desired.

\textbf{Case V:} $\mathbf{b\le q \le d:}$ Set $d=q+kb+r$ with $0\leq r <b$. A non-zero $F\in S_d$, is of the form \begin{equation}\label{eq:pol}
F=x_1^{\ell}\prod\limits_{y_3\in J}(y_3x_1^{b}-x_3)F'(x_1,x_2,x_3)    
\end{equation} for a subset $J\subseteq \F_q$ as in Case IV and for a homogeneous polynomial $F'$ of degree $\mu=d-\ell-|J|b$. 


Let $0\le |J|\le k$. Using the facts that $f_3=x_1^{q}x_2-x_1x_2^{q}\in I(Y)$ and $|V_{Y}(F)|=|V_{Y}(\Bar{F})|$ whenever $F-\Bar{F}\in I(Y)$, we can replace $x_1x_2^{q}$ with $x_1^{q}x_2$ in $F$ and assume that $F$ has no term divisible by $x_1x_2^{q}$. If $\ell >0$, $x_1$ divides $F$ and so $\deg_{x_2}(F)<q$ yielding that the univariate polynomial $f(x_2):=F(1,x_2,y_3)\in \F_q[x_2]\setminus \{0\}$ has at most $q-1$ roots, for each $y_3\in \F_q \setminus J$. If $\ell=0$ and $f(x_2)$ has $q$ roots $y_2\in \F_q$, then $x_2-y_2$ divides $f$, meaning that $\deg_{x_2}(F)\geq q$ since $x_2-y_2$ could have been obtained from $x_2-y_2x_1$ or $y_3x_2^b-y_2^bx_3$ dividing $F$. Writing $F'=x_1F_1+F_2$ with $x_1 \nmid F_2$, we observe that $F_1=0$ as $x_2-y_2$ divides $f$ for all $y_2\in \F_q$ and $F$ has no term divisible by $x_1x_2^{q}$. So $x_2-y_2x_1$ can not divide $F'=F_2$ when $y_2\neq 0$ implying that $y_3x_2^b-y_2^bx_3$ divides $F'$ for all $y_2\in \F_q^*$. Hence, we get 
$$d-|J|b=\deg_{x_2}(F)=\deg_{x_2}(F')\geq (q-1)(q-|J|)b,$$ 
since there are $q-|J|$ such $y_3\in \F_q \setminus J$. It follows from $k\leq q-2$ and $r<b$ that 
\begin{align*}
q+(q-1)b>d=q+kb+r&\ge|J|b+(q-1)(q-|J|)b\\&=2|J|b+q^{2}b-q|J|b-qb\\&=(2|J|+q^{2}-q|J|-q)b\\&=(2-q)|J|b+q(q-1)b.
\end{align*}
Then, we have $q>(2-q)|J|b+(q(q-1)-(q-1))b=(2-q)|J|b+(q-1)^{2}b.$ Thus, $q+(q-2)|J|b>(q-1)^{2}b.$ Since $0\le |J|\le k\le q-2$, we have
\begin{align*}
 q+(q-2)^{2}b>(q-1)^{2}b\Rightarrow  q>(q-1)^{2}b-(q-2)^{2}b=(2q-3)b
\end{align*}
As $b\ge 2$, we get $q>(2q-3)b\ge 4q-6$ leading to the contradiction that $2>q$. Thus, $f$ can have at most $q-1$ roots. Therefore, if $0\le |J|\le k$, then we get the following inequality:
\begin{eqnarray*}
|V_{Y}(F)| &\le& q+1+q|J|+(q-|J|)(q-1)\\
&\le& q+1+q|J|+q^2-q-q|J|+|J|\\
&\le& q^2+k+1.
\end{eqnarray*}

On the other hand, if $|J|>k$ then we can write $|J|=k+j_{0}$ with $j_{0}\ge 1$. Let us consider the case $\ell>0$ first. In this case, $|J|<q$, for if $|J|=q$ we have $F\in I(Y)$ since the following equality holds 
\begin{equation*}\label{eq:rootj=q}
|V_{Y}(F)|=q+1+q|J|+(q-|J|)\deg_{x_2}(F)=q^{2}+q+1.
\end{equation*} 
It follows that 
\begin{align*}
d-\ell- b|J|&=d-\ell-bk-b(j_{0}-1+1)\\
&=q+r-b-b(j_{0}-1)-\ell \quad ( \text{ as } d-bk=q+r)\\
&\le q-1-\ell-b(j_{0}-1) \quad ( \text{ as } r-b\le -1) \\
&\le q-2-b(j_{0}-1) \quad ( \text{ as } -\ell \le -1).
\end{align*} Therefore we have,
\begin{align*}\label{eq:root}
|V_{Y}(F)| 
&\le q+1+q|J|+(q-|J|)(d-\ell-|J|b)\\
&\le q+1+q|J|+(q-|J|)(q-2-b(j_{0}-1)) \\
&=q+1+q|J|+q^{2}-2q-qbj_{0}+qb-q|J|+2|J|+bj_{0}|J|-b|J| \\
&=q+1+q^{2}-2q-qb(j_{0}-1)+2|J|+b|J|(j_{0}-1)\\
&=q^{2}+1-q+(j_{0}-1)b(|J|-q)+2|J|+k-k\\
&=q^{2}+k+1+(j_{0}-1)b(|J|-q)+|J|-k+|J|-q \\
&\le q^{2}+k+1+(j_{0}-1)b(-1)+j_{0}-1 \quad (\text{ as } |J|-q\le -1.)\\
&=q^{2}+k+1 -(j_{0}-1)(b+1)\\
&\le q^{2}+k+1.
\end{align*}

Let $|J|>k$ and $\ell=0.$ As $|J|=k+j_0$ where $j_0\ge 1$, we have
\begin{align*} d-bj&=d-bk-bj_0\\
&=q+r-b(j_0+1-1)\\
&=q+r-b-b(j_0-1)\\
&\le q-1-b(j_0-1)
\end{align*}
in this case which yields
\begin{align*}
|V_{Y}(F)|&\le q+1+q|J|+(q-|J|)(q-1-b(j_{0}-1))\\
&= q+1+q|J|+q^2-q-q|J|+|J|+(|J|-q)b(j_{0}-1)+k-k\\
&= q^2+k+1+(|J|-q)b(j_{0}-1)+|J|-k\\
&\le q^2+k+1-b(j_{0}-1)+j_0 \quad (\text{ if } |J|-q\le -1)\\
&\le q^2+k+1 \quad (\text{ if } j_0\ge 2 \text{ as } b\ge 2).\\
\end{align*}
Hence it remains to prove the same inequality when $j_{0}=1$ or $|J|=q$. Let $j_0=1$ so we have $|J|=k+1$. Recall that the number of roots of the form $[0:y_2:1]$ with $y_2\in \F_q^*$ is at most $d_0-|J|=d_0-k-1<q-1$ since $d_0<k+q$. Including $[0:1:0]$ and $[0:0:1]$, there are at most $q$ roots with $x_1=0.$ So, we have 
\begin{align*}
|V_{Y}(F)|&\le q+q(k+1)+(q-(k+1))(d-b(k+1))\\
&\le q+qk+q+(q-k-1)(q-1) \quad (\text{ as }q+r-b\le q-1)\\
&\le q+qk+q+q^2-q-qk+k-q+1=q^{2}+k+1.
\end{align*}
On the other hand, if $|J|=q$, then clearly we have 
\begin{align*}
|V_{Y}(F)|&\le 2+d_0-|J|+q|J|+(q-|J|)(d-|J|b)\\
&=2+d_0-q+q^2\\
&\le q^2+k+1 \quad (\text{ as } d_0<k+q)
\end{align*}
Thus, for all $\ell$ and  for all $|J|$ we have $|V_{Y}(F)| \le q^{2}+k+1.$ Consider now the following polynomial of degree $d=q+kb+r$ with $0\le r <b$,
\[F_{0}(x_1,x_2,x_3)=x_1^{r+1}\prod\limits_{y_3=1}^{k}(y_3x_1^{b}-x_3)\prod\limits_{y_2\in \F_q^{*}}(x_2-y_2x_1)\in S_d.
\] 
$F_{0}$ vanishes at $[1:y_2:y_3]$ for $k$ of $y_3$ and for all $y_2\in \F_q$ as well as it vanishes at $[1:y_2:y_3]$ for the remaining $q-k$ of $y_3$ and for all $y_2\in \F_q^*$. So it has $qk+(q-1)(q-k)=q^2-q+k$ roots. In addition, since $ r+1\ge 1$, $F_0$ vanishes at $[0:y_2:1]$ for all $y_2\in \F_q$ and at $[0:1:0].$ In total, $F_0$ has exactly $q^2+k+1$ roots.

Therefore, we get finally the following equality, 
\begin{equation*}
\delta=N-(q^2+k+1)=q^2+q+1-q^2-k-1=q-k,\end{equation*}
as desired.

\textbf{Case VI:} $\mathbf{q< b \leq d:}$
The argument used in the proof of the previous situation applies here since the proof is independent of whether $q$ is greater or less than $b$ and we get $\delta=q-k$,
as desired.

\textbf{Case VII:} $\mathbf{d=q+kb+r \mbox{ with } 0\le r<b \mbox{ and } k \ge q-1:}$

\[F_0=x_1^{\ell_{0}}\prod_{y_3\in \F_{q}^{*}}(y_3x_1^{b}-x_3)\prod_{y_2\in \F_{q}^{*}}(y_2x_1-x_2)\in S_d\] vanishes exactly at $q^2+q=q+1+(q-1)q+ (q-1)$ points, since the power $l_0:=d-q-(q-1)b+1\ge r+1\geq 1$. This means that the corresponding codeword $ev_Y(F_0)$ will have weight $1$. 
\end{proof}
\begin{remark} Conjecture 2.3 of \cite{ACGLOR2017} states an upper bound on $|V_Y(F)|$ for the set $Y=\P(1,w_1,\dots,w_n)(\F_q)$ and for any homogeneous polynomial of degree $d$ which is a multiple of $\lcm(w_1,\dots,w_n)$ proving it in the case where $m=2$ in Theorem 2.4. This is done in Theorem \ref{t:mindis} in Case $4$ with $r_0=0$.
\end{remark}
\subsection{Examples for the Main Parameters}\label{sec:expar}
In this subsection, we present tables showing the main parameters of the codes over the weighted projective spaces $X=\P(1,1,b)$. In this part our aim is to illustrate the work done throughout this article and to provide an opportunity to compare the obtained codes with the records at http://codetables.de/ (see \cite{Grassl:codetables}). The degrees shown in red in the tables show the first elements of the regularity sets. The degrees indicated in blue in the tables show that the size of a code at every b degree is equal to the length of the code.
\begin{table}[H]
\addtolength{\tabcolsep}{-3pt}
\centering
\captionof{table}{The Main Parameters of the codes on $Y=\P(1,1,b)(\F_2)$\label{Tab:mpq2}}
\vskip-1em
\begin{tabular}{ ccc }
\begin{tabular}{|l|l|l|l|l|}
\hline
$\mathbf{b}$& \textbf{Degree}& $\mathbf{N}$ & $\mathbf{K}$ & $\mathbf{\delta}$ \\ \hline
$b=2$ 
& $d=2$&7 & 4  &  2    \\ \hline
& $d=3$ & 7  &5 &  2  \\ \hline
& ${\color{darkred}d=4}$&7 & 7 &   1   \\ \hline
& $d=5$ & 7  & 6&  1  \\ \hline
& ${\color{darkblue}d=6}$& 7&  7 &    1  \\ \hline
& $d=7$ & 7 & 6&  1  \\ \hline
& ${\color{darkblue}d=8}$&7& 7&   1   \\ \hline
& $d=9$ &7& 6&  1  \\ \hline
& ${\color{darkblue}d=10}$&7&  7 &    1  \\ \hline
& $d=11$& 7&  6 &    1  \\ \hline
& ${\color{darkblue}d=12}$ & 7 & 7&  1  \\ \hline
& $d=13$&7& 6&   1   \\ \hline
& ${\color{darkblue}d=14}$ &7& 7&  1  \\ \hline
& $d=15$&7&  6 &    1  \\ \hline
\end{tabular}
\begin{tabular}{|l|l|l|l|l|}
\hline
$\mathbf{b}$& \textbf{Degree}& $\mathbf{N}$ & $\mathbf{K}$ & $\mathbf{\delta}$ \\ \hline
$b=5$  
& $d=2$&7 & 3  &  2    \\ \hline
& $d=3$ & 7  &3 &  2  \\ \hline
& $d=4$&7 & 3 & 2\\ \hline
& $d=5$ & 7  & 4&  2 \\ \hline
& $d=6$& 7&  5 &    1  \\ \hline
& $d=7$ & 7 & 6&  1  \\ \hline
& $d=8$&7& 6&   1   \\ \hline
& $d=9$ &7& 6&  1  \\ \hline
& ${\color{darkred}d=10}$&7&  7 &    1  \\ \hline
& $d=11$& 7&  6 &    1  \\ \hline
& $d=12$ & 7 & 6&  1  \\ \hline
& $d=13$&7& 6&   1   \\ \hline
& $d=14$ &7& 6&  1  \\ \hline
& ${\color{darkblue}d=15}$&7&  7 &    1  \\ \hline
\end{tabular}
\begin{tabular}{|l|l|l|l|l|}
\hline
$\mathbf{b}$& \textbf{Degree}& $\mathbf{N}$ & $\mathbf{K}$ & $\mathbf{\delta}$ \\ \hline
$b=7$ 
& $d=2$&7 & 3  &  2    \\ \hline
& $d=3$ & 7  &3 &  2  \\ \hline
& $d=4$&7 & 3 &   2  \\ \hline
& $d=5$ & 7  & 3&  2  \\ \hline
& $d=6$& 7&  3 &    2  \\ \hline
& $d=7$ & 7 & 4&  1  \\ \hline
& $d=8$&7& 5&   1   \\ \hline
& $d=9$ &7& 6&  1  \\ \hline
& $d=10$&7&  6 &    1  \\ \hline
& $d=11$& 7&  6 &    1  \\ \hline
& $d=12$ & 7 & 6&  1  \\ \hline
& $d=13$&7& 6&   1   \\ \hline
& ${\color{darkred}d=14}$ &7& 7&  1  \\ \hline
& $d=15$&7&  6 &    1  \\ \hline
\end{tabular}
\end{tabular}
\end{table}

\begin{table}[H]
\addtolength{\tabcolsep}{-3pt}
\centering
\captionof{table}{The Main Parameters of the codes on $Y=\P(1,1,b)(\F_5)$\label{Tab:mpq5}}
\vskip-1em
\begin{tabular}{ ccc }
\begin{tabular}{|l|l|l|l|l|}
\hline
$\mathbf{b}$& \textbf{Degree}& $\mathbf{N}$ & $\mathbf{K}$ & $\mathbf{\delta}$ \\ \hline
$b=2$
& $d=2$& 31& 4  & 20     \\ \hline
& $d=3$ & 31& 6& 15   \\ \hline
& $d=4$& 31 &  9 &  10    \\ \hline
& $d=5$ & 31 & 12&  5  \\ \hline
& $d=6$&31& 15  &  5    \\ \hline
& $d=7$ & 31 & 18 &  4  \\ \hline
& $d=8$& 31& 21 & 4     \\ \hline
& $d=9$ & 31 & 24&  3  \\ \hline
& $d=10$& 31& 27 &   3   \\ \hline
& $d=11$ &31 & 28 &  2  \\ \hline
& $d=12$&31 & 30 &   2   \\ \hline
& $d=13$ & 31& 30 & 1   \\ \hline
& ${\color{darkred}d=14}$& 31 & 31&   1   \\ \hline
& $d=15$ &31 & 30&  1  \\ \hline
& ${\color{darkblue}d=16}$&31 & 31 &    1  \\ \hline
& $d=17$ & 31 & 30 &  1  \\ \hline
& ${\color{darkblue}d=18}$& 31& 31  &  1    \\ \hline
& $d=19$ & 31 & 30&  1  \\ \hline
& ${\color{darkblue}d=20}$& 31& 31&  1    \\ \hline
& $d=21$ &31 & 30&   1 \\ \hline
& ${\color{darkblue}d=22}$&31 & 31&  1    \\ \hline
& $d=23$ & 31 & 30 &  1  \\ \hline
& ${\color{darkblue}d=24}$& 31& 31&   1   \\ \hline
& $d=25$ & 31 & 30&  1  \\ \hline
& ${\color{darkblue}d=26}$& 31& 31&   1   \\ \hline
& $d=27$ & 31 & 30 &  1  \\ \hline
& ${\color{darkblue}d=28}$& 31& 31&  1    \\ \hline
& $d=29$ & 31 & 30&  1  \\ \hline
& ${\color{darkblue}d=30}$& 31& 31&    1  \\ \hline
& $d=31$ & 31 & 30 &   1 \\ \hline
& ${\color{darkblue}d=32}$& 31& 31&   1   \\ \hline
& $d=33$ & 31 & 30&   1 \\ \hline
& ${\color{darkblue}d=34}$& 31& 31&   1   \\ \hline
& $d=35$ & 31 & 30&   1 \\ \hline
\end{tabular}
\begin{tabular}{|l|l|l|l|l|}
\hline
$\mathbf{b}$& \textbf{Degree}& $\mathbf{N}$ & $\mathbf{K}$ & $\mathbf{\delta}$ \\ \hline
$b=5$
& $d=2$& 31& 3 &  20    \\ \hline
& $d=3$ & 31& 4&   15 \\ \hline
& $d=4$& 31 &  5 &   10   \\ \hline
& $d=5$ & 31 & 7&   5 \\ \hline
& $d=6$&31& 8  &   5   \\ \hline
& $d=7$ & 31 & 9&  5  \\ \hline
& $d=8$& 31& 10 &   5   \\ \hline
& $d=9$ & 31 & 11&  5  \\ \hline
& $d=10$& 31& 13 &    4  \\ \hline
& $d=11$ &31 & 14 &  4  \\ \hline
& $d=12$&31 & 15 &    4  \\ \hline
& $d=13$ & 31& 16 &  4  \\ \hline
& $d=14$& 31 & 17&  4    \\ \hline
& $d=15$ &31 & 19&  3  \\ \hline
& $d=16$&31 & 20&  3    \\ \hline
& $d=17$ & 31 & 21 &  3  \\ \hline
& $d=18$& 31& 22  &  3    \\ \hline
& $d=19$ & 31 & 23&  3  \\ \hline
& $d=20$& 31& 25&  2    \\ \hline
& $d=21$ &31 & 26&  2  \\ \hline
& $d=22$&31 & 27&   2   \\ \hline
& $d=23$ & 31 & 28 &  2  \\ \hline
& $d=24$& 31& 29  &   2   \\ \hline
& ${\color{darkred}d=25}$ & 31 & 31&   1 \\ \hline
& $d=26$& 31& 30&   1   \\ \hline
& $d=27$ & 31 & 30 &  1  \\ \hline
& $d=28$& 31& 30  &  1    \\ \hline
& $d=29$ & 31 &30&   1 \\ \hline
& ${\color{darkblue}d=30}$& 31& 31&  1    \\ \hline
& $d=31$& 31& 30&  1    \\ \hline
& $d=32$ & 31 & 30 &  1  \\ \hline
& $d=33$& 31& 30  &  1    \\ \hline
& $d=34$ & 31 &30&  1  \\ \hline
& ${\color{darkblue}d=35}$& 31& 31& 1     \\ \hline
\end{tabular}
\begin{tabular}{|l|l|l|l|l|}
\hline
$\mathbf{b}$& \textbf{Degree}& $\mathbf{N}$ & $\mathbf{K}$ & $\mathbf{\delta}$ \\ \hline
$b=7$
& $d=2$& 31& 3  &   20   \\ \hline
& $d=3$ & 31& 4& 15   \\ \hline
& $d=4$& 31 &  5 & 10     \\ \hline
& $d=5$ & 31 & 6&  5  \\ \hline
& $d=6$&31& 6&  5    \\ \hline
& $d=7$ & 31 & 7 &  5  \\ \hline
& $d=8$& 31& 8 &  5    \\ \hline
& $d=9$ & 31 & 9&  5  \\ \hline
& $d=10$& 31& 10 &  5   \\ \hline
& $d=11$ &31 & 11 & 5   \\ \hline
& $d=12$&31 & 12&   4   \\ \hline
& $d=13$ & 31& 12 & 4   \\ \hline
& $d=14$& 31 & 13&   4   \\ \hline
& $d=15$ &31 & 14&  4  \\ \hline
& $d=16$&31 & 15& 4     \\ \hline
& $d=17$ & 31 & 16 & 4   \\ \hline
& $d=18$& 31& 17 &  4    \\ \hline
& $d=19$ & 31 & 18&  3  \\ \hline
& $d=20$& 31& 18&  3    \\ \hline
& $d=21$ &31 & 19&  3  \\ \hline
& $d=22$&31 & 20&   3   \\ \hline
& $d=23$ & 31 & 21 &  3  \\ \hline
& $d=24$& 31& 22&   3   \\ \hline
& $d=25$ & 31 & 23&   3 \\ \hline
& $d=26$& 31& 24&  2    \\ \hline
& $d=27$&31 & 24&   2   \\ \hline
& $d=28$ & 31 & 25 &   2 \\ \hline
& $d=29$& 31& 26&   2   \\ \hline
& $d=30$ & 31 & 27&  2  \\ \hline
& $d=31$& 31& 28&  2    \\ \hline
& $d=32$&31 & 29 &  2    \\ \hline
& $d=33$ & 31 & 30 & 1   \\ \hline
& $d=34$& 31& 30  &   1   \\ \hline
& ${\color{darkred}d=35}$ & 31 & 31&  1  \\ \hline
\end{tabular}
\end{tabular}
\end{table}

\section*{Acknowledgements} The authors thank Jade Nardi for her useful comments on the manuscript. The authors are supported by the Scientific and Technological Research Council of Turkey (T\"{U}B\.{I}TAK) under Project No: \mbox{119F177}.
The first author is supported by T\"{U}B\.{I}TAK-\mbox{2211}/A. This article is part of the first author's Ph.D. thesis under the supervision of the second author.


\begin{thebibliography}{99}

\bibitem{A92}
Yves Aubry.
\newblock Reed-{M}uller codes associated to projective algebraic varieties.
\newblock In {\em Coding theory and algebraic geometry ({L}uminy, 1991)},
  volume 1518 of {\em Lecture Notes in Math.}, pages 4--17. Springer, Berlin,
  1992.

\bibitem{ACGLOR2017}
Yves Aubry, Wouter Castryck, Sudhir~R. Ghorpade, Gilles Lachaud, Michael~E.
  O'Sullivan, and Samrith Ram.
\newblock Hypersurfaces in weighted projective spaces over finite fields with
  applications to coding theory.
\newblock In {\em Algebraic geometry for coding theory and cryptography},
  volume~9 of {\em Assoc. Women Math. Ser.}, pages 25--61. Springer, Cham,
  2017.

\bibitem{BSS2016}
Daniele Bartoli, Adnen Sboui, and Leo Storme.
\newblock Bounds on the number of rational points of algebraic hypersurfaces
  over finite fields, with applications to projective {R}eed-{M}uller codes.
\newblock {\em Adv. Math. Commun.}, 10(2):355--365, 2016.

\bibitem{BRWPS}
Mauro Beltrametti and Lorenzo Robbiano.
\newblock Introduction to the theory of weighted projective spaces.
\newblock {\em Exposition. Math.}, 4(2):111--162, 1986.

\bibitem{BJ1993}
Winfried Bruns and J\"{u}rgen Herzog.
\newblock {\em Cohen-{M}acaulay rings}, volume~39 of {\em Cambridge Studies in
  Advanced Mathematics}.
\newblock Cambridge University Press, Cambridge, 1993.

\bibitem{BE1973}
David~A Buchsbaum and David Eisenbud.
\newblock What makes a complex exact?
\newblock {\em Journal of Algebra}, 25(2):259--268, 1973.


\bibitem{DN2017}
Eduardo Dias and Jorge Neves.
\newblock Codes over a weighted torus.
\newblock {\em Finite Fields Appl.}, 33:66--79, 2015.

\bibitem{GT2013}
Olav Geil and Casper Thomsen.
\newblock Weighted {R}eed-{M}uller codes revisited.
\newblock {\em Des. Codes Cryptogr.}, 66(1-3):195--220, 2013.

\bibitem{Grassl:codetables}
Markus Grassl.
\newblock {Bounds on the minimum distance of linear codes and quantum codes}.
\newblock Online available at \url{http://www.codetables.de}, 2007.
\newblock Accessed on 2023-06-12.

\bibitem{M2}
Daniel~R. Grayson and Michael~E. Stillman.
\newblock Macaulay2, a software system for research in algebraic geometry.

\bibitem{GL1988}
Gilles Lachaud.
\newblock Projective {R}eed-{M}uller codes.
\newblock In {\em Coding theory and applications ({C}achan, 1986)}, volume 311
  of {\em Lecture Notes in Comput. Sci.}, pages 125--129. Springer, Berlin,
  1988.

\bibitem{GL1990}
Gilles Lachaud.
\newblock The parameters of projective {R}eed-{M}uller codes.
\newblock {\em Discrete Math.}, 81(2):217--221, 1990.

\bibitem{MS2005}
Ezra Miller and Bernd Sturmfels.
\newblock {\em Combinatorial commutative algebra}, volume 227 of {\em Graduate
  Texts in Mathematics}.
\newblock Springer-Verlag, New York, 2005.

\bibitem{JN2019}
Jade Nardi.
\newblock Algebraic geometric codes on minimal {H}irzebruch surfaces.
\newblock {\em J. Algebra}, 535:556--597, 2019.

\bibitem{JN22}
Jade Nardi.
\newblock Projective toric codes.
\newblock {\em Int. J. Number Theory}, 18(1):179--204, 2022.

\bibitem{S91}
Anders Bj\ae rt S\o rensen.
\newblock Projective {R}eed-{M}uller codes.
\newblock {\em IEEE Trans. Inform. Theory}, 37(6):1567--1576, 1991.

\bibitem{S1992}
Anders Bj\ae rt S\o rensen.
\newblock Weighted {R}eed-{M}uller codes and algebraic-geometric codes.
\newblock {\em IEEE Trans. Inform. Theory}, 38(6):1821--1826, 1992.

\bibitem{MS2018}
Mesut \c{S}ahin.
\newblock Toric codes and lattice ideals.
\newblock {\em Finite Fields Appl.}, 52:243--260, 2018.

\bibitem{MS2022}
Mesut \c{S}ahin.
\newblock Computing vanishing ideals for toric codes, 2022, https://arxiv.org/abs/2207.01061.

\end{thebibliography}
\end{document}